\newtheorem{theorem}{Theorem}
\newtheorem{conjecture}{Conjecture}
\newtheorem{corollary}[theorem]{Corollary}
\newtheorem{lemma}[theorem]{Lemma}
\newtheorem{definition}{Definition}
\DeclareMathOperator*{\argmax}{\text{argmax}}
\DeclareMathOperator{\uu}{\mathbf{u}}
\title[Insights into the Structured Coordination Game with Neutral Options through Simulation]{Insights into the Structured Coordination Game with Neutral Options through Simulation}
\author*[1,3]{\fnm{John S.} \sur{McAlister}}\email{jmcalis6@vols.utk.edu}
\author[1,2,3]{\fnm{Nina H.} \sur{Fefferman}}
\affil*[1]{\orgdiv{Department of Mathematics}, \orgname{University of Tennessee - Knoxville}}
\affil[2]{\orgdiv{Department of Ecology and Evolutionary Biology}, \orgname{University of Tennessee - Knoxville}}
\affil[3]{\orgname{National Institute for Modeling Biological Systems (NIMBioS)}}
\date{May 2024}
\abstract{Coordination games have been of interest to game theorists, economists, and ecologists for many years to study such problems as the emergence of local conventions and the evolution of cooperative behavior. Approaches for understanding the coordination game with discrete structure have been limited in scope, often relying on symmetric reduction of the state space, or other constraints which limit the power of the model to give insight into desired applications. In this paper, we introduce a new way of thinking about equilibria of the structured coordination game with neutral strategies by means of graph partitioning. We begin with a few elementary game theoretical results and then catalogue all the Nash equilibria of the coordination game with neutral options for graphs with seven or fewer vertices. We extend our observations through the use of simulation on larger Erd\H{o}s-R\'enyi random graphs to form the basis for proposing some conjectures about the general relationships among edge density, cluster number, and consensus stability.}
\keywords{Local Conventions, Evolution of Cooperation, Graph Partitions, Evolutionary Game Theory, Coordination on Random Graphs, Graph Isomorphism Reduction}
\begin{document}

\maketitle

\section{Introduction}\label{introduction}
    The Coordination game, at its most basic, is a two player game with the following payoff matrix (Fig. \ref{twobytwopayoff})
    under the assumption that $a>d$ and $b>c$.
    \begin{figure}[h!]\label{twobytwopayoff}
		\begin{center} 
			\begin{tabular}{c|cc}
				&A&B\\
				\hline 
				A&a,a&c,d\\
				B&d,c&b,b
			\end{tabular}
		\end{center}
		\caption{The Payoff matrix for the $2\times 2$ coordination game}
    \end{figure}
    It has two pure strategy Nash equilibria, $(A,A), (B,B)$, and a mixed strategy Nash equilibrium, where both players play strategy $A$ with probability $p=\frac{b-d}{a+b-c-d}$. On its own, the two player game with two strategies is entirely understood\cite{Maschler2013}. Economists like Kandori, Mishihari and Rob considered how the dynamics of this game change when extended beyond a dyadic interaction to instead consider randomly paired interactions among $n$ players \cite{Kandori1993,Kandori1995,Robson1995}. The results they present show that, in these cases, when all players interact uniformly with one another and there are only two strategies available to pick from, the game will always converge to a consensus equilibrium through myopic best response with high inertia and $\varepsilon$-noise (i.e. one player updates their strategy to the best response and, rarely, a player may spontaneously change their strategy). 

    To understand the game in a more realistic setting, Ellison added structure to the game by assuming that players interact with some individuals more than others. This was accomplished by considering the game on a graph with players as vertices and with edges between players who may interact\cite{Ellison1993}. With structure like a square lattice or the so called ``circular city," the state space can be reduced and similar results about convergence to a consensus equilibrium can be found through the Radius-Coradius Theorem \cite{Ellison2000}. In addition to adding structure to the model, Ellison also relaxed the high inertia assumption and had all individuals update their strategies simultaneously. In the years following, many similar results were shown in slightly modified settings \cite{Oechssler1997,Oechssler1999,Ely2002} and compared to some empirical studies \cite{Szkup2020}. Which, up until 2010, are well summarized in \cite {Weidenholzer}.

    It was the goal of each of the models above to understand the dynamics of this game when there are strategies which offer different payoffs. Kandori et al., Ellison, and many of their predecessors have been able to determine when that the system will converge to the Risk dominant Nash equilibrium under certain random player pairings schemes. Further papers seeking to understand this game have followed in that same tradition \cite{Arditti2024} \cite{Buskens2016} and have often ignored the interesting case in which there are many strategies available but none offers a better payoff than the other (i.e. for any pair of players the payoff matrix is the identity matrix). This critical case brings with it the possibility of more complicated equilibrium states and may help us answer question about the emergence of conventions which do not provide a direct payoff.  Kadnori et al. proved that in the unstructured case, when the payoff matrix is $I_2$ then the game will converge to one of the two consensus equilibria with equal probability. In a structured setting, one study considered a neutral strategy. in addition to the two strategies traditionally considered, and was able to discuss properties of phase transitions from transient dynamics to equilibrium \cite{Szabo2016}, but their treatment of the problem is constrained only to the square lattice.  

    Seeking to fill this gap we have examined the structured coordination game, in which every strategy is neutral, on general graphs,  both analytically and numerically. In Section \ref{Background}, we describe the game more fully and give some simple analytical results. In section \ref{Catalogue}, we present a catalogue of all of the Nash equilibria of the game, which we call equilibrium partitions, for groups no larger than 7. In section \ref{Simulation}, we use a larger simulation, sweeping over many graphs on greater numbers of vertices, to understand trends between features of the graphs like edge density, diameter, centralization etc. and features of the partitions they admit. Finally, in section \ref{Discussion}, we offer some discussion of the results through conjectures which are active areas of further investigation.  We highlight how this critical case in the structured coordination game fits into the context of the previous work. By establishing new ways to discuss this system, we get closer to a more complete theoretical understanding of coordinated behavior and thereby enhance relevance to application areas such as the evolution of cooperation, spatially distributions of language and convention, and local conventions among cooperating groups.

\section{Background Analytical Results}\label{Background}  For the structured coordination game with neutral options, we imagine each individual as a vertex in a graph with edges connecting to the other individuals with whom they interact. The payoff of a strategy is determined only by the number of neighbors a player has that are using that same strategy. In this way, the payoff function is very simple. 
For a connected graph $G(V,E)$ each vertex $v\in V$ plays a strategy $c$ from a set of pure strategies $C$, and the payoff for $v$ is given by 
 \begin{equation}\label{fitness1}
		w(v,c|\uu)=|\{x\in \Gamma (v);\uu_x=c\}|
\end{equation}
where $\uu$ is the strategy profile, $\uu_x$ is the strategy that vertex $x$ is using, and $\Gamma(x)$ is the set of vertices which share an edge with $x$.

Using a myopic best response update rule we describe an initial value problem whose (non-unique) solutions satisfy \begin{equation}\label{dsys}
			\uu_v(t+1)\in \argmax_{c\in C}\{w(v,c|\uu(t))\}.
		\end{equation}
		It may be that $|\argmax|>1$, so we break ties in the following way, which could be described as $\varepsilon$-inertia (i.e. there is a very small cost of changing strategies which can always be overcome if changing strategies results in a payoff increase, but cannot be overcome if changing strategies results in no change in payoff). 
        If $\uu_v(t)\in \argmax_{c\in C}\{w(v,c|\uu(t))\}$, then $ \uu_v(t+1)=\uu_v(t)$. If not, $\uu_x(t+1)$ is selected from $\argmax_{c\in C}\{w(v,c|\uu(t))\}$ uniform randomly. Notice that we cannot impose a tie breaking rule like those described in \cite{Nissan2011} exactly because of the assumption the each strategy is neutral.

  There are two trivial facts which are important to state now.
  \begin{lemma} Equilibria of the dynamical system \eqref{dsys} are Nash equilibria of the structured coordination game with neutral options.
  \end{lemma}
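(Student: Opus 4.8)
The plan is to show that the two notions in play---being an equilibrium (fixed point) of the dynamical system \eqref{dsys} and being a Nash equilibrium of the game---both reduce to the \emph{same} pointwise condition, namely that every vertex is already playing a best response against its neighbors. The key observation that makes this immediate is that the payoff driving the argmax in \eqref{dsys} is literally the game's payoff function $w$ defined in \eqref{fitness1}; once both definitions are unwound, they should coincide and the lemma follows.

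First I would characterize the equilibria of \eqref{dsys}. A profile $\uu^{*}$ is an equilibrium precisely when applying the update rule returns $\uu^{*}$ unchanged. Here the $\varepsilon$-inertia tie-breaking is exactly what makes the characterization clean: for each vertex $v$, if $\uu^{*}_v\in\argmax_{c\in C}\{w(v,c\mid\uu^{*})\}$ then the inertia clause forces $\uu^{*}_v(t+1)=\uu^{*}_v(t)$, whereas if $\uu^{*}_v\notin\argmax_{c\in C}\{w(v,c\mid\uu^{*})\}$ then the rule reassigns $v$ to some strategy of strictly higher payoff and the profile changes. Hence $\uu^{*}$ is a fixed point if and only if $\uu^{*}_v\in\argmax_{c\in C}\{w(v,c\mid\uu^{*})\}$ for every $v\in V$. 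Note this characterization is insensitive to whether updates are applied synchronously or one vertex at a time, since at a fixed point, by definition, nothing moves.

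Next I would unwind the definition of a Nash equilibrium for this game. Each vertex is a player whose payoff under $\uu^{*}$ is $w(v,\uu^{*}_v\mid\uu^{*})$, and a unilateral deviation to an alternative strategy $c'$ yields $w(v,c'\mid\uu^{*})$. The profile is a Nash equilibrium exactly when no such deviation is profitable, i.e. $w(v,\uu^{*}_v\mid\uu^{*})\ge w(v,c'\mid\uu^{*})$ for all $c'\in C$ and all $v\in V$---which is verbatim the statement that $\uu^{*}_v\in\argmax_{c\in C}\{w(v,c\mid\uu^{*})\}$ for every $v$. Comparing with the preceding paragraph, the dynamical-equilibrium condition and the Nash condition are one and the same, so every equilibrium of \eqref{dsys} is a Nash equilibrium (and, in fact, the converse holds too).

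There is no genuinely difficult step here; the result is essentially a translation between two vocabularies for the same local optimality condition. The only place demanding a little care is the stochastic tie-breaking: I must argue that a fixed point cannot tolerate any vertex sitting strictly below its own best response, and this is precisely where the $\varepsilon$-inertia assumption does the work---a reshuffling rule that moved among already-tied optima could fail to fix such profiles, but the inertia rule keeps a vertex put exactly when it is already optimal.
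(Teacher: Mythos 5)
Your proposal is correct and matches the paper's treatment: the paper simply declares the lemma ``immediate from the definitions of equilibrium and Nash equilibrium,'' and your argument is exactly that unwinding, with the $\varepsilon$-inertia tie-breaking observation supplying the one detail worth making explicit. No issues.
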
 
  \begin{lemma} Every graph admits at least one equilibrium which is the strategy profile where every individual is using the same strategy. We will call this the consensus equilibrium.
  \end{lemma}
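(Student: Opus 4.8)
The plan is to exhibit an explicit fixed point of the update dynamics \eqref{dsys} and then appeal to the preceding lemma, which guarantees that every fixed point of \eqref{dsys} is a Nash equilibrium of the game. Concretely, I would fix an arbitrary pure strategy $c_0 \in C$ and let $\uu^\ast$ denote the profile with $\uu^\ast_x = c_0$ for every $x \in V$. It then suffices to show that $\uu^\ast$ is invariant under one step of the best-response map, i.e. that $\uu^\ast_v(t+1) = \uu^\ast_v(t)$ for each vertex $v$ whenever $\uu(t) = \uu^\ast$.

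The heart of the verification is a direct evaluation of the payoff function \eqref{fitness1} at $\uu^\ast$. For an arbitrary vertex $v$, every neighbor in $\Gamma(v)$ plays $c_0$, so
\begin{equation*}
  w(v, c_0 | \uu^\ast) = |\Gamma(v)| = \deg(v), \qquad w(v, c | \uu^\ast) = 0 \quad \text{for all } c \neq c_0.
\end{equation*}
Hence, whenever $\deg(v) \geq 1$, the strategy $c_0$ is the \emph{unique} element of $\argmax_{c \in C} w(v, c | \uu^\ast)$, so the best-response map returns $c_0$ at $v$. Since $G$ is connected with more than one vertex, every vertex has at least one neighbor, and this already forces $\uu^\ast_v(t+1) = c_0 = \uu^\ast_v(t)$ for all $v$; thus $\uu^\ast$ is a fixed point.

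I expect no substantive obstacle. The only point requiring care is the degenerate situation in which a vertex has no neighbors (for example a single-vertex graph, or should the connectedness hypothesis be dropped), so that $\deg(v) = 0$ and all strategies tie with payoff $0$. There $c_0$ is no longer the unique maximizer, but it still belongs to $\argmax_{c \in C} w(v, c | \uu^\ast) = C$, and the $\varepsilon$-inertia tie-breaking rule then forces the vertex to retain $c_0$ rather than switch. The fixed-point property therefore persists in every case, and invoking the previous lemma completes the argument that the consensus profile $\uu^\ast$ is a Nash equilibrium admitted by every graph.
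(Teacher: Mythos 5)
Your argument is correct and is exactly the explicit verification behind the paper's one-line justification that the lemma is ``immediate from the definitions'': under the consensus profile every deviation yields payoff $0 \leq \deg(v)$, so the profile is a fixed point of the best-response map (with the $\varepsilon$-inertia rule covering isolated vertices) and hence a Nash equilibrium by the preceding lemma. No substantive difference from the paper's intended reasoning; your write-up just spells out the details the authors omit.
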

  
Both of these facts are immediate from the definitions of equilibrium and Nash equilibrium. 

There is a natural correspondence between strategy profiles and vertex partitions. For any strategy profile, the corresponding vertex partition is the partition wherein all the vertices using the same strategy are in the same part of the partition. We find that using the language of vertex partitions helps alleviate redundancy borne from the fact that interchanging strategies is only a relabeling and does not change the game state in any way.  We will note a partition as $Q:=\{q^c\}_{c\in C}$, where $q^c$ denotes the set of vertices using strategy $c$ (also called a cluster). In this way, we can discuss equivalence classes of strategy profiles under relabelling rather than particular strategy profiles. The focus of the present study is to understand those partitions which correspond to Nash equilibria.  

\begin{definition}[Equilibrium Partition]
    An equilibrium partition is a partition which corresponds to an equivalence class of strategy profiles which are Nash Equilibria. 
\end{definition}
\begin{definition}[Indecomposable]
    If the only equilibrium partition of a graph is the trivial partition, that graph is said to be Indecomposable. Naturally, if non-trivial equilibrium partitions exist, the graph is said to be decomposable. 
\end{definition}
Little is known, in general, about equilibrium partitions but there are two more lemmata which will help our later discussion. 
\begin{lemma}\label{clusters of size 1}
    If $Q=\{q^c\}_{c\in C}$ is an equilibrium partition of a connected graph of order greater than 1, then $|q^c|\neq 1$.
\end{lemma}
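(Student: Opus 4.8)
The plan is a short proof by contradiction that exploits the fact that a vertex alone in its own cluster can earn only zero payoff, whereas connectivity always offers it a strictly better alternative. Suppose $Q=\{q^c\}_{c\in C}$ is an equilibrium partition of such a graph $G$, with corresponding strategy profile $\uu$, and suppose toward a contradiction that some cluster is a singleton, say $q^{c_0}=\{v\}$, so that $v$ is the unique vertex playing strategy $c_0$.

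First I would compute $v$'s payoff under $c_0$. Since $v$ is the only user of $c_0$, no neighbor of $v$ plays $c_0$, and \eqref{fitness1} gives $w(v,c_0|\uu)=|\{x\in\Gamma(v);\uu_x=c_0\}|=0$. Next I would invoke connectivity to manufacture a profitable deviation: because $G$ is connected of order greater than $1$, the vertex $v$ has at least one neighbor $u\in\Gamma(v)$, and its strategy $c_1:=\uu_u$ satisfies $c_1\neq c_0$ (as $v$ is the sole user of $c_0$). Hence $w(v,c_1|\uu)\geq 1>0=w(v,c_0|\uu)$, so $c_0\notin\argmax_{c\in C}\{w(v,c|\uu)\}$.

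Finally I would close the argument by the definition of equilibrium: since $c_0$ is not a best response for $v$, the update rule \eqref{dsys} moves $v$ off $c_0$, so $\uu$ is not a fixed point of the dynamics --- equivalently, $v$ has a profitable unilateral deviation and $\uu$ is not a Nash equilibrium --- contradicting that $Q$ is an equilibrium partition. There is no substantive obstacle here; the only step requiring attention is the appeal to connectivity together with order greater than $1$ to guarantee that $v$ actually has a neighbor, since that neighbor is exactly what supplies the improving deviation.
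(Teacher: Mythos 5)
Your proof is correct and is exactly the argument the paper has in mind when it declares the lemma trivial: a singleton cluster gives its vertex payoff $0$, while connectivity and order greater than $1$ guarantee a neighbor whose strategy yields payoff at least $1$, so the vertex is not playing a best response. The paper omits the proof entirely, so your write-up simply supplies the intended (and only natural) argument.
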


The proof of lemma \ref{clusters of size 1} is trivial.

\begin{lemma}\label{connected clusters}
    If $Q=\{q^c\}_{c\in C}$ is an equilibrium partition and the subgraph spanned by $q^i$ is disconnected, then, if $q^i_1,...,q^i_n$ are the connected components of $q^i$, $\tilde{Q}=\{q^c\}_{c\neq i}\cup \{q^i_j\}_{j=1}^n$ is also an equilibrium partition. 
\end{lemma}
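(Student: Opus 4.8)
The plan is to reduce the claim to a vertex-by-vertex verification of the best-response condition for the refined profile. First I would recall that, because of the $\varepsilon$-inertia tie-breaking rule, a profile $\uu$ is a fixed point of \eqref{dsys}—equivalently a Nash equilibrium, by the lemma stating that equilibria of \eqref{dsys} are Nash equilibria—precisely when every vertex already plays a best response, i.e. $w(v,\uu_v\mid\uu)\ge w(v,c\mid\uu)$ for all $v\in V$ and all $c\in C$. Let $\tilde{\uu}$ be the profile associated with $\tilde{Q}$, obtained from $\uu$ by giving each connected component $q^i_j$ a fresh label while leaving every other cluster (and hence every vertex outside $q^i$) untouched. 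It then suffices to check the best-response inequality for $\tilde{\uu}$ at each vertex, and I would split this into the cases $v\in q^i$ and $v\notin q^i$.

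The core case is $v\in q^i$, say $v\in q^i_k$. The structural fact that does the real work is that $q^i_k$ is a connected component of the subgraph spanned by $q^i$, so $v$ has no neighbor in any other component $q^i_j$ with $j\ne k$; hence $\Gamma(v)\cap q^i=\Gamma(v)\cap q^i_k$. This yields the crucial identity $w(v,i_k\mid\tilde{\uu})=|\Gamma(v)\cap q^i_k|=|\Gamma(v)\cap q^i|=w(v,i\mid\uu)$, so the payoff of $v$'s new strategy equals the payoff it received from $i$ under $\uu$. For any competing label the payoff is either unchanged (for the untouched clusters) or zero (for the other components $i_j$, where $v$ has no neighbor, and for the now-unused label $i$), and in every case is bounded by $w(v,i\mid\uu)=w(v,i_k\mid\tilde{\uu})$ via the best-response inequality that $\uu$ already satisfied at $v$. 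Thus $i_k$ remains a best response for $v$.

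For $v\notin q^i$ the vertex keeps its strategy $c$, and since $q^c$ is untouched its own payoff is unchanged: $w(v,c\mid\tilde{\uu})=w(v,c\mid\uu)$. The only deviations whose payoff changes are those to the new labels $i_j$, and there $w(v,i_j\mid\tilde{\uu})=|\Gamma(v)\cap q^i_j|\le|\Gamma(v)\cap q^i|=w(v,i\mid\uu)\le w(v,c\mid\uu)$, so the refinement creates no profitable deviation while all remaining payoffs are as under $\uu$. Having verified the best-response condition everywhere, $\tilde{\uu}$ is a Nash equilibrium and $\tilde{Q}$ is an equilibrium partition. The one point I would handle carefully—rather than a genuine obstacle—is the bookkeeping of the vanishing label $i$ against the fresh labels $i_j$ when comparing to an arbitrary deviation, which tacitly requires the strategy set $C$ to be large enough to supply the extra labels demanded by the refinement; once the connected-component property is in hand, everything else collapses to the elementary inequalities above.
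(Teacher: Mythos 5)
Your proposal is correct and follows essentially the same route as the paper's proof: verify the best-response inequality vertex by vertex, using the key fact that a vertex in a connected component $q^i_k$ has no neighbors in any other component, so $\Gamma(v)\cap q^i=\Gamma(v)\cap q^i_k$ and the relevant payoffs are preserved (the paper merely splits your first case into interior and boundary vertices of $q^i$, and your remark about supplying fresh labels matches the paper's introduction of new strategies $(1,i)$). No gaps.
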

Before we give the proof, we present some notation for simplicity. Let $\partial q^c$ be the the set of those vertices in $q^c$ which have neighbors which are not in $q^c$. Moreover, let $\overline{\partial q^c}$ be those vertices not in $q^c$ which have neighbors in $q^c$.
\begin{proof}
    Suppose that $P=\{p^i\}_{i=1}^m$ is an equilibrium partition corresponding to the strategy profile $\uu$, and say the subgraph spanned by $p^1$ is disconnected. We will show that refining the partition so that every part corresponds to a connected subgraph results in an equilibrium partition. It will be no loss of generality to suppose the original equilibrium partition has only 1 disconnected part, so $p^i$ corresponds to a connected subgraph for $i>1$. Refine the partition to $\Tilde{P}=\{p^1_1,...,p^1_n\}\cup\{p^i\}_{i=2}^m$ and consider the corresponding strategy profile $\tilde{\uu}$. The new parts, $p^1_1,...,p^1_n$, correspond to new strategies which were not previously present in the original strategy profile. Any vertex $v\in p^i\setminus \overline{\partial p^1}$ for $i>1$ will be unaffected by this refinement, so we need only consider those vertices in $p^1$ and $\overline{\partial p^1}$.

    First we will consider a vertex  $v\in p^c\cap\partial p^1$. Observe that $w(v,c|\tilde{\uu})=w(v,c|\uu)=\geq w(v,1|\uu)$. Trivially, $p^1_i\subset p^1$ for all $i$, so $|p^1_i\cap\Gamma(v)|\leq|p^1\cap\Gamma(v)|$ for all $i$. This means, if $(1,i)$ is the strategy corresponding to $p^1_i$, then $w(v,(1,i)|\tilde{\uu})\leq  w(v,1|\uu)\leq w(v,c|\tilde{\uu})$ for all $i$. Of course, it is still true that $ w(v,c|\tilde{\uu})\geq w(v,s|\tilde{\uu})$ for any $s\neq 1$ because these vertices are unchanged from the original Nash equilibrium. Thus $v$ is playing its best response in $\tilde{\uu}$.

    Now we consider those vertices $v\in p^1\setminus\partial p^1$. Notice that under the refinement, $v\in p^1_i\setminus\partial p^1_i$ because the refinement only separates the connected components of the subgraph spanned by $p^1$. Of course, those vertices which only neighbor other vertices using the same strategy are using their best response. This continues to be true under the refinement.

    Lastly, we consider those vertices $v\in \partial p^1$ which, under the refinement, are in $\partial p^1_i$ for some $i$. $\Gamma(v)\cap p^1=\Gamma(v)\cap P^1_i$ because the refinement did not break any connected components of $p^1$ into separate parts. Thus, $w(v,1|\uu)=w(v,(1,i)|\tilde{\uu})$. Because $v$ was playing a best response with respect to $\uu$, $w(v,1|\uu)\geq w(v,c|\uu)=w(v,c|\tilde{\uu})$ and because $p^1_i$ and $p^1_j$ have no neighboring vertices for any $i\neq j$, $w(v,(1,j)|\tilde{\uu})=0$ for any $i \neq j$. Therefore, $w(v,(1,i)|\tilde{\uu})\geq w(v,c|\tilde{\uu})$ for any $c>1$ or $c = (1,1),...(1,n)$. Thus, $v$ is playing a best response under the refinement. 

    Therefore, under the refinement, every vertex is still using a best response strategy, so $\tilde{\uu}$ is a Nash equilibrium and the refinement is an equilibrium partition.   
\end{proof}

While purpose of the present paper is to give a numerical treatment of the system in order to inform more general analytical results, some analytical results can be shown for simple graphs with nice properties. Theorems \ref{KnTheorem} and \ref{KnnTheorem} are two such results which will be referred to later in the discussion as we make conjectures about more general results. 

\begin{theorem}\label{KnTheorem} $K_n$ is indecomposable.
		\end{theorem}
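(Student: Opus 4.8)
The plan is to exploit the fact that in $K_n$ every vertex is adjacent to every other vertex, so that the payoff of any strategy depends only on the sizes of the clusters rather than on any finer structure. First I would record the two relevant payoffs. For a strategy profile $\uu$ with corresponding partition $Q=\{q^c\}_{c\in C}$ and a vertex $v\in q^c$, the payoff of $v$'s current strategy is $w(v,c|\uu)=|q^c|-1$, since every other member of its own cluster is a neighbor but $v$ does not count itself. The payoff of switching to any rival strategy $c'$ is $w(v,c'|\uu)=|q^{c'}|$, because all of $q^{c'}$ lies in $\Gamma(v)$.

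Next I would translate the equilibrium condition into an inequality on cluster sizes. Under the $\varepsilon$-inertia rule, $v$ is playing a best response exactly when its current strategy lies in the argmax, i.e.\ when $|q^c|-1\geq|q^{c'}|$ for every cluster $c'\neq c$. For $Q$ to be an equilibrium partition this must hold simultaneously for every vertex of the graph, and since all vertices of a given cluster are interchangeable in $K_n$, it reduces to a condition purely among the cluster sizes.

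The core of the argument is then a one-line contradiction. Suppose $Q$ were a non-trivial equilibrium partition, so it contains at least two clusters $q^c$ and $q^{c'}$. Applying the equilibrium condition to a vertex of $q^c$ gives $|q^c|-1\geq|q^{c'}|$, and applying it to a vertex of $q^{c'}$ gives $|q^{c'}|-1\geq|q^c|$. Adding these yields $|q^c|+|q^{c'}|-2\geq|q^c|+|q^{c'}|$, i.e.\ $-2\geq0$, which is absurd. Hence no partition with two or more clusters can be an equilibrium, so the only equilibrium partition is the trivial consensus partition, which is precisely the statement that $K_n$ is indecomposable.

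I do not anticipate a genuine obstacle, but the one point requiring care is the role of the $-1$ self-discount in the payoff of a vertex's current strategy. It is exactly this term that rules out two equally sized clusters coexisting at equilibrium, since a vertex could always migrate to a rival cluster of equal size for a \emph{strict} payoff gain, and the $\varepsilon$-inertia only shields against lateral moves that leave the payoff unchanged. Ensuring that ``best response'' is read as membership in the argmax (rather than as strict maximality) is the single place where the tie-breaking convention must be invoked correctly.
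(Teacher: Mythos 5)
Your proof is correct and takes essentially the same route as the paper's: in $K_n$ payoffs depend only on cluster sizes, a vertex of $q^c$ earns $|q^c|-1$ from its own cluster and $|q^{c'}|$ from any rival cluster, and comparing two clusters yields an immediate contradiction (the paper names the max payoff $a$ and exhibits a deviating vertex rather than summing the two inequalities, but the idea is identical). Note that your case bookkeeping is the correct one --- the paper's displayed formula \eqref{Knfitness} has its two cases transposed relative to the definition of $w$, though the prose that follows it uses the same values you derive.
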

		\begin{proof} 
		Suppose, by way of contradiction, there is an equilibrium strategy profile $\mathbf{u^*}$ with $d\geq2$ clusters, $q^{c_1},...,q^{c_d}$. Because every pair of vertices shares an edge,
			\begin{equation}\label{Knfitness}
				w(v,c_i|\uu)=\begin{cases}
					|q^{c_i}|&\uu_v=c_i\\
					|q^{c_i}|-1&\uu_v\neq c_i
				\end{cases}
			\end{equation} so the fitness of a vertex is directly associated with the size of the cluster of which it is a part.  Consider a vertex $v_1\in q^{c_1}$. Because it is at equilibrium, $$w(v_1,c_1|\uu^*)=\max_C\{w(v_1,c|\uu^*)\}=:a.$$
			Therefore, $|q^{c_1}|=a+1$. Moreover, If $v_1$ played a different strategy, $c_2$, its fitness would be  $w(v_1,c_2|\uu^*)\leq a$ so $|q^{c_2}|\leq a$.
			Now consider a separate vertex, $v_2\in q^{c_2}$. $w(v_2,c_2|\uu^*)=|q^{c_2}|-1\leq a-1$ by  equation \eqref{Knfitness}. However, $w(v_2,c_1|\uu^*)=a+1$. 
			Thus $\uu^*_{v_2}=c_2\notin \argmax\{w(v_2,c|\uu^*)\}$, so $\uu^*$ is not an equilibrium.  
		\end{proof}

	\begin{theorem}\label{KnnTheorem}{$K_{n,m}$ admits an equilibrium partition with $d$ parts if and only if $d|n$ and $d|m$.}
		\end{theorem}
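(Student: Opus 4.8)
The plan is to exploit the extreme simplicity of payoffs on a complete bipartite graph, where a vertex's payoff for a strategy depends only on how many vertices on the \emph{opposite} side use that strategy. First I would fix notation: write $U,W$ for the two sides of $K_{n,m}$ with $|U|=n$ and $|W|=m$, and for a strategy profile $\uu$ and strategy $c$ let $a_c=|\{u\in U:\uu_u=c\}|$ and $b_c=|\{w\in W:\uu_w=c\}|$. Since every vertex of $U$ is adjacent to all of $W$ and to nothing else, the payoff of a $U$-vertex for $c$ is exactly $b_c$, and symmetrically a $W$-vertex's payoff for $c$ is $a_c$. This reduces the whole equilibrium condition to statements about the two count-vectors $(a_c)$ and $(b_c)$.

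For the necessity direction, suppose $Q$ is an equilibrium partition with $d$ parts. The best-response condition then reads: any strategy $c$ played by some $U$-vertex must satisfy $b_c=\max_{c'}b_{c'}=:B$, and any strategy played by some $W$-vertex must satisfy $a_c=\max_{c'}a_{c'}=:A$. Because the graph is connected we have $n,m\geq 1$, so $A\geq 1$ and $B\geq 1$. I would then show that the set $S_U$ of strategies used on side $U$ equals the set $S_W$ used on side $W$: a strategy in $S_U$ has $b_c=B\geq 1$, hence is used on $W$, giving $S_U\subseteq S_W$, and symmetrically $S_W\subseteq S_U$. Writing $S$ for this common set with $|S|=d$, every $c\in S$ has $a_c=A$ and $b_c=B$. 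Summing over $S$ (all other counts vanish) gives $dA=n$ and $dB=m$, so $d\mid n$ and $d\mid m$.

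For sufficiency, assume $d\mid n$ and $d\mid m$. I would exhibit an explicit equilibrium partition: split $U$ into $d$ blocks of size $n/d$ and $W$ into $d$ blocks of size $m/d$, and assign strategy $i$ to the $i$-th block on each side. Then $a_c=n/d$ and $b_c=m/d$ for every used strategy, so on each side all strategies yield identical payoffs; by the $\varepsilon$-inertia tie-breaking rule every vertex retains its current strategy and is therefore playing a best response, making the profile a Nash equilibrium. Since $d\mid n$ and $d\mid m$ with $n,m\geq 1$ force $n/d\geq 1$ and $m/d\geq 1$, every block is nonempty and the partition genuinely has $d$ parts, each of size $(n+m)/d\geq 2$, consistent with Lemma \ref{clusters of size 1}.

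The only delicate step is the equalization argument in the necessity direction, namely turning the per-vertex best-response inequalities into the global conclusion that $S_U=S_W$ and that all cross-side counts are equal. The rest is bookkeeping. I expect the cleanest phrasing to be the pigeonhole observation that a maximum over nonnegative integers which is attained by \emph{every} used strategy, combined with the constraints $\sum_c a_c=n$ and $\sum_c b_c=m$, forces uniform block sizes and hence the two divisibility conditions.
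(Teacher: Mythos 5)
Your proposal is correct and follows essentially the same route as the paper: the sufficiency direction is the identical explicit construction with $d$ equal blocks on each side, and the necessity direction rests on the same observation that best response forces every used strategy to attain the maximal count on the opposite side, so all cross-side counts equalize and divide $n$ and $m$. Your direct ``sum the equal counts'' phrasing is a slightly cleaner packaging than the paper's argument by contradiction (which empties out a deficient cluster in two steps), but the underlying idea is the same.
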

	
		\begin{proof} Consider the complete bipartite graph $K_{n,m}$ which has parts $E^n$ and $E^m$, with $n$ and $m$ vertices, respectively.

			$\impliedby$ Let $d$ be a common divisor of $m$ and $n$. It is sufficient to construct an equilibrium strategy profile with $d$ strategies. Partition the vertices of $E^m$ so that there are $m/d$ vertices using each strategy. Likewise, partition the vertices of $E^n$ so that there are $n/d$ vertices using each strategy. Consider $v_1\in E^m$.  Notice, because it shares an edge with every vertex in $E^n$, its fitness is certainly 
			\begin{equation}
				w(v_1,c|\uu)=\frac{n}{d}\quad \forall\, c\in C
			\end{equation}
			Therefore, it is certainly playing a best response. The same can be said for every vertex in $E^n$. Thus we have constructed an equilibrium strategy profile with $d$ clusters and so the corresponding partition is an equilibrium partition.

			$\implies$ Suppose, by way of contradiction   and Without loss of generality, that $d$ is not a divisor of $m$. Suppose further that $\uu^*$ is an equilibrium strategy profile with $d$ clusters. $d$ does not divide $m$, so $\exists$ strategies $r$ and $s$ such that $|q^r\cap E^m|>|q^s\cap E^m|$. If $\uu^*$ is at equilibrium, it must follow that $q^s\cap E^n=\emptyset$. It then follows that $q^s\cap E^m =\emptyset$. If $q^s$ is empty, then there are not $d$ clusters in $\uu^*$. This contradiction proves that, if there is an equilibrium strategy profile in $K_{n,m}$, the number of clusters must be a common divisor of $n$ and $m$.   
		\end{proof}
    \begin{corollary}
        $K_{n,m}$ is indecomposable if and only if $n$ and $m$ are coprime. 
    \end{corollary}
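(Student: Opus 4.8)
The plan is to obtain this statement immediately from Theorem \ref{KnnTheorem} together with the definition of indecomposability, treating it as a genuine corollary rather than a fresh argument. First I would unwind the definitions: a graph is indecomposable exactly when its only equilibrium partition is the trivial (consensus) partition, namely the partition into a single part, i.e. the one with $d=1$. Consequently $K_{n,m}$ is \emph{decomposable} precisely when it admits some non-trivial equilibrium partition, and since any partition other than the consensus partition has $d\geq 2$ parts, this is the same as saying that $K_{n,m}$ admits an equilibrium partition with $d\geq 2$ parts for some integer $d$.

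Next I would invoke Theorem \ref{KnnTheorem}, which states that $K_{n,m}$ admits an equilibrium partition with $d$ parts if and only if $d\mid n$ and $d\mid m$. Combining this with the reduction above, $K_{n,m}$ is decomposable if and only if there exists an integer $d\geq 2$ dividing both $n$ and $m$. The final step is the elementary number-theoretic translation: such a common divisor $d\geq 2$ exists if and only if $\gcd(n,m)>1$, so that no such $d$ exists if and only if $\gcd(n,m)=1$, i.e. $n$ and $m$ are coprime. Reading the chain of equivalences in the indecomposable direction yields that $K_{n,m}$ is indecomposable if and only if $n$ and $m$ are coprime, which is the claim.

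I do not expect a real obstacle here, since the corollary is a direct repackaging of Theorem \ref{KnnTheorem}; the only point demanding care is the bookkeeping around the definition of the trivial partition. Concretely, I would confirm that ``indecomposable'' means precisely that no equilibrium partition with two or more parts exists, so that the $d\geq 2$ instance of the theorem is exactly the relevant one, after which the biconditional follows by negation. No estimate or construction is required beyond what Theorem \ref{KnnTheorem} already supplies.
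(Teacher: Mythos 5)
Your argument is correct and is exactly the intended derivation: the paper offers no separate proof of this corollary, treating it as an immediate consequence of Theorem \ref{KnnTheorem} in precisely the way you spell out (a nontrivial equilibrium partition has $d\geq 2$ parts, which by the theorem exists iff some $d\geq 2$ divides both $n$ and $m$, i.e.\ iff $\gcd(n,m)>1$). Your careful unwinding of the definition of indecomposability matches the paper's usage, so there is nothing to add.
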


The results in both theorems \ref{KnTheorem} and \ref{KnnTheorem} depend heavily on the structure of the graph. For this reason, we cannot extrapolate these techniques to draw general conclusions about the system. Instead, we start by building up a numerical understanding of the system so that we can get a better idea of the types of behavior we can expect. 
 
\section{Catalogue of Equilibrium Partitions on Small Graphs} \label{Catalogue}
In order to start working towards more robust analytical results, we start by presenting a catalogue of all the equilibrium partitions on small graphs. We will see that, for the smallest graphs, non-trivial partitions are rare, but even by $n=6$, less than half of connected graphs are indecomposable (and note obviously that all disconnected graphs are decomposable simply by partitioning connected components). This catalogue domonstrates that indecomposability depends highly on the exact structure of the graph, and to support the conjecture that indecomposability will become asymptotically rare in Erd\H{o}s-R\'enyi random graphs ($\Gamma_{n,N(n)}$) \cite{Erdos1959} as $n\rightarrow \infty$.

Although it is very easy to make an exhaustive list of equilibrium partitions on graphs with no more than 5 vertices using only the results from section \ref{Background}, when the number of graphs to consider grows, we resort to computer assisted exhaustive search. First note that it is a very easy computation to check if a partition is an equilibrium partition. For a graph with $n$ vertices the operation runs in $\mathcal{O}(n^2)$, but we consider only small $n$, so this step is practically trivial. With an easily available catalogue of connected graphs of order at most seven, it is a nearly trivial task to consider every possible vertex partition. For ease of computation, we can disregard any partition which has a part of size 1 by lemma \ref{clusters of size 1}. With far fewer than Bells number, the number of partitions of $n$ distinguishable elements, $B(n)$,\cite{AWalkThroughCombinatorics} of partitions to check for each graph of order $n$, the exhaustive search can be completed very quickly. 

The biggest complication is making sure that each partition is unique up to isomorphism; this is non-trivial step but it can be done, albeit inelegantly,  without much elegance, by relying on algorithms to detect graph isomorphisms. 

There are three ways by which two partitions of the same graph might be isomorphic. The first, and easier to detect, is the case in which the parts of the partitions are simply relabeled (Fig.\ref{isofigure} row 1). We can exclude many of these isomorphisms by relabelling early in the algorithm without much difficulty. 

The next way for two partitions of the same graph to be isomorphic is by way of graphical symmetry(Fig. \ref{isofigure} row 2). To tackle this problem we need only find an injection, $\Psi$, from the set of labeled partitions of labeled graph to the set of labeled graphs. We do this through graph expansion. By adding a number of pendant vertices of each vertex in the original graph corresponding to the part of the partition to which it belongs (i.e. if $P=\{q_i\}_{i=1}^n$, then $v\in q_i$ is given $i$ pendent vertices) we have such an injection. If partitions are isomorphic by way of graphical symmetry then they will be expanded to labeled graphs which are isomorphic. $P_1\simeq P_2\Rightarrow \Psi(P_1)\sim\Psi(P_2)$. More specifically, the labeled graphs will be isomorphic to one another, so as unlabeled graphs they are identical. Moreover, because this expansion is injective, if two expanded graphs are identical (or the two labeled graphs are isomorphic) to one another, it is certain that the corresponding graph partitions are isomporphic. $\Psi(P_1)\sim\Psi(P_2)\Rightarrow P_1\simeq P_2$.  There are existing efficient algorithms to detect isomorphisms between labeled graphs, so we take advantage of these to efficiently detect isomorphisms between graph partitions. 

The final way two partitions might be isomorphic is through a combination of graphical symmetry and relabelling (Fig. \ref{isofigure} row 3). In this case, because we deal with very small partitions, we can simply check for isomorphism through graphical symmetry for every possible relabelling. Note that it is far more efficient to initially only test the partitions which are distinct up to relabelling, then add in the relabelled versions of the partitions to test in this third case, than it is to run the algorithm without any removal. 

\begin{figure*}[h!]
    \centering
    \begin{tabular}{|m{2cm}|m{5.5cm}|c|}
    \hline
        Type& Partition Diagram& Vector View \\
         \hline
         \hline
         \raggedright
         Isomorphic by way of relabeling & \begin{tikzpicture}
            \node(a)[circle, fill, inner sep =1.5pt] at (0,0){};
            \node(b)[circle, fill, inner sep = 1.5pt] at(0.5,-0.5){};
            \node(c)[circle, fill, inner sep = 1.5pt] at(1,0){};
            \node(d)[circle, fill, inner sep = 1.5pt] at(1.5,-0.5){};
            \node(e)[circle, fill, inner sep = 1.5pt] at(2,0){};

            \draw (a)--(b)--(c)--(d)--(e);
            
		  \node[fit=(a)(b)(c),dashed, draw, rectangle,rounded corners=10,inner sep=5pt] {};
        \node[fit=(d)(e),dashed, draw, rectangle,rounded corners=10,inner sep=5pt] {};

            \node(label11) at (0.5,0.5){$p^1$};
            \node(label12) at (1.75,0.5){$p^2$};

            \node(equal)at (2.5,-0.25){$\simeq$};

            \node(f)[circle, fill, inner sep =1.5pt] at (3,0){};
            \node(g)[circle, fill, inner sep = 1.5pt] at(3.5,-0.5){};
            \node(h)[circle, fill, inner sep = 1.5pt] at(4,0){};
            \node(i)[circle, fill, inner sep = 1.5pt] at(4.5,-0.5){};
            \node(j)[circle, fill, inner sep = 1.5pt] at(5,0){};

            \draw (f)--(g)--(h)--(i)--(j);
            
		  \node[fit=(f)(g)(h),dashed, draw, rectangle,rounded corners=10,inner sep=5pt] {};
        \node[fit=(i)(j),dashed, draw, rectangle,rounded corners=10,inner sep=5pt] {};

            \node(label21) at (3.5,0.5){$p^2$};
            \node(label22) at (4.75,0.5){$p^1$};
            
         \end{tikzpicture} 
         & $[1,1,1,2,2]\simeq[2,2,2,1,1]$\\
         \hline \raggedright
         Isomorphic by way of symmerty & \begin{tikzpicture}
            \node(a)[circle, fill, inner sep =1.5pt] at (0,0){};
            \node(b)[circle, fill, inner sep = 1.5pt] at(0.5,-0.5){};
            \node(c)[circle, fill, inner sep = 1.5pt] at(1,0){};
            \node(d)[circle, fill, inner sep = 1.5pt] at(1.5,-0.5){};
            \node(e)[circle, fill, inner sep = 1.5pt] at(2,0){};

            \draw (a)--(b)--(c)--(d)--(e);
            
		  \node[fit=(a)(b)(c),dashed, draw, rectangle,rounded corners=10,inner sep=5pt] {};
        \node[fit=(d)(e),dashed, draw, rectangle,rounded corners=10,inner sep=5pt] {};

            \node(label11) at (0.5,0.5){$p^1$};
            \node(label12) at (1.75,0.5){$p^2$};

            \node(equal)at (2.5,-0.25){$\simeq$};

            \node(f)[circle, fill, inner sep =1.5pt] at (3,0){};
            \node(g)[circle, fill, inner sep = 1.5pt] at(3.5,-0.5){};
            \node(h)[circle, fill, inner sep = 1.5pt] at(4,0){};
            \node(i)[circle, fill, inner sep = 1.5pt] at(4.5,-0.5){};
            \node(j)[circle, fill, inner sep = 1.5pt] at(5,0){};

            \draw (f)--(g)--(h)--(i)--(j);
            
		  \node[fit=(f)(g),dashed, draw, rectangle,rounded corners=10,inner sep=5pt] {};
        \node[fit=(h)(i)(j),dashed, draw, rectangle,rounded corners=10,inner sep=5pt] {};

            \node(label21) at (3.25,0.5){$p^2$};
            \node(label22) at (4.5,0.5){$p^1$};
            
         \end{tikzpicture} 
         & $[1,1,1,2,2]\simeq[2,2,1,1,1]$\\
         \hline\raggedright
         Isomorphic by way of symmetry and relabeling & \begin{tikzpicture}
            \node(a)[circle, fill, inner sep =1.5pt] at (0,0){};
            \node(b)[circle, fill, inner sep = 1.5pt] at(0.5,-0.5){};
            \node(c)[circle, fill, inner sep = 1.5pt] at(1,0){};
            \node(d)[circle, fill, inner sep = 1.5pt] at(1.5,-0.5){};
            \node(e)[circle, fill, inner sep = 1.5pt] at(2,0){};

            \draw (a)--(b)--(c)--(d)--(e);
            
		  \node[fit=(a)(b)(c),dashed, draw, rectangle,rounded corners=10,inner sep=5pt] {};
        \node[fit=(d)(e),dashed, draw, rectangle,rounded corners=10,inner sep=5pt] {};

            \node(label11) at (0.5,0.5){$p^1$};
            \node(label12) at (1.75,0.5){$p^2$};

            \node(equal)at (2.5,-0.25){$\simeq$};

            \node(f)[circle, fill, inner sep =1.5pt] at (3,0){};
            \node(g)[circle, fill, inner sep = 1.5pt] at(3.5,-0.5){};
            \node(h)[circle, fill, inner sep = 1.5pt] at(4,0){};
            \node(i)[circle, fill, inner sep = 1.5pt] at(4.5,-0.5){};
            \node(j)[circle, fill, inner sep = 1.5pt] at(5,0){};

            \draw (f)--(g)--(h)--(i)--(j);
            
		  \node[fit=(f)(g),dashed, draw, rectangle,rounded corners=10,inner sep=5pt] {};
        \node[fit=(h)(i)(j),dashed, draw, rectangle,rounded corners=10,inner sep=5pt] {};

            \node(label21) at (3.25,0.5){$p^1$};
            \node(label22) at (4.5,0.5){$p^2$};
            
         \end{tikzpicture} 
         & $[1,1,1,2,2]\simeq[1,1,2,2,2]$\\
         \hline
    \end{tabular}
    \caption{An example of the three different ways two labeled partitions of labeled graphs can be isomorphic to one another and the ``vector view," which is simply how the computer stores the partition information. The first row shows two partitions which are isomorphic by way of relabelling. The second row shows two partitions which are isomorphic by way of graphical symmetry. The third row shows two partitions which are isomorphic by way of both graphical symmetry and relabeling. }
    \label{isofigure}
\end{figure*}

In the very worst case, the algorithm to detect partition isomorphisms has complexity $n!$ times the complexity of the graph isomorphism detection algorithm ($2^{\mathcal{O}(\log(n)^3}$)
\cite{helfgott2017graph}
but through use of lemma \ref{clusters of size 1} and early removal of relabelled partitions, we can get this below $\left\lfloor\frac{n}{2}\right\rfloor!2^{\mathcal{O}(\log(4n)^3)}$, which is more than sufficient for the size of graphs we will catalogue. Notice the replacement of $n$ with $4n$ is due to the fact that our injection adds up to $\lfloor\frac{n}{2}\rfloor n $ vertices to the graph.

The only remaining complication is whether we include disconnected parts in our partitions. For our purposes, because of lemma \ref{connected clusters}, we will only consider those partitions with connected parts. Therefore, we discard any partitions with disconnected parts and do not include them in our catalogue. The refinement of such a partition, wherein every part has a connected spanning subgraph, will certainly be included.  

Having resolved both of these complications, we can examine every graph with at most 7 vertices and do an exhaustive search of all equilibrium partitions. In the following subsections, the results of the search are presented.

\subsection{n=1}
Finding equilibrium partitions in $K_1$ is trivial (Fig. \ref{n1equilibria }).
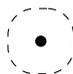
\begin{figure}[h!]
    \centering
        \begin{tikzpicture}
			\node(a)[circle, fill, inner sep =1.5pt] at (1.5,0){};
		
			\node[fit=(a),dashed, draw, rectangle,rounded corners=10,inner sep=10pt] {};	
		\end{tikzpicture}

    \caption{The only equilibrium partition in $K_1$ is obviously the trivial partition. There is no further investigation required.}
    \label{n1equilibria }
\end{figure}

\subsection{n=2}
Finding equilibrium partitions in $K_2$ is also made trivial by lemma \ref{clusters of size 1} (Fig.\ref{n2equilibria}).
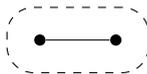
\begin{figure}[h!]
    \centering 
         \begin{tikzpicture}
			\node(a)[circle, fill, inner sep =1.5pt] at (0,0){};
            \node(b)[circle, fill, inner sep = 1.5pt] at(1,0){};

            \draw (a)--(b);
            
			\node[fit=(a)(b),dashed, draw, rectangle,rounded corners=10,inner sep=10pt] {};	
		\end{tikzpicture}
  \caption{The only equilibrium in $K_2$ is the trivial partition}
  \label{n2equilibria}
\end{figure}

\subsection{n=3}
There are only two connected, non-isomorphic graphs of order 3 and the only equilibrium partition admitted by either is, again, the trivial partition (Fig. \ref{n3equilibria}). This is also immediate from lemma \ref{clusters of size 1}. 

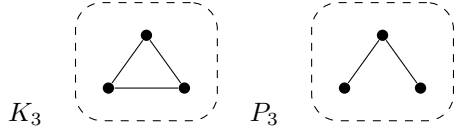
\begin{figure}[h!]
    \centering 
    \begin{tabular}{rcrc}
    \textbf{$K_3$}&
        \begin{tikzpicture}
		  \node(a)[circle, fill, inner sep =1.5pt] at (0,0){};
            \node(b)[circle, fill, inner sep = 1.5pt] at(1,0){};
            \node(c)[circle, fill, inner sep = 1.5pt] at(0.5,0.7){};

            \draw (a)--(b)--(c)--(a);
            
		  \node[fit=(a)(b)(c),dashed, draw, rectangle,rounded corners=10,inner sep=10pt] {};	
		\end{tikzpicture}
  & \textbf{$P_3$}&\begin{tikzpicture}
		  \node(a)[circle, fill, inner sep =1.5pt] at (0,0){};
            \node(b)[circle, fill, inner sep = 1.5pt] at(1,0){};
            \node(c)[circle, fill, inner sep = 1.5pt] at(0.5,0.7){};

            \draw (a)--(c)--(b);
            
		  \node[fit=(a)(b)(c),dashed, draw, rectangle,rounded corners=10,inner sep=10pt] {};	
		\end{tikzpicture}
  \end{tabular}
  \caption{The only equilibrium in $K_3$ and $P_3$ is the trivial partition}
  \label{n3equilibria}
\end{figure}

\subsection{n=4}
There are 6 connected, non-isomorphic graphs of order 4. Four of them are indecomposable (Fig. \ref{indecomposable4}). By lemma \ref{clusters of size 1}, any candidate equilibrium partition must have two parts, each with two vertices. With this fact, there are very few potential partitions to consider and each can be ruled out using the definition of equilibrium. 

\begin{figure}[h!]
    \centering
    \begin{tabular}{cccc}
         $K_4$& \begin{tikzpicture}
		  \node(a)[circle, fill, inner sep =1.5pt] at (0,0){};
            \node(b)[circle, fill, inner sep = 1.5pt] at(1,0){};
            \node(c)[circle, fill, inner sep = 1.5pt] at(0,1){};
            \node(d)[circle, fill, inner sep = 1.5pt] at(1,1){};

            \draw (a)--(c)--(b)--(d)--(a)--(b);
            \draw (c)--(d);	
		\end{tikzpicture}& $K_4-e$&
         \begin{tikzpicture}
		  \node(a)[circle, fill, inner sep =1.5pt] at (0,0){};
            \node(b)[circle, fill, inner sep = 1.5pt] at(1,0){};
            \node(c)[circle, fill, inner sep = 1.5pt] at(0,1){};
            \node(d)[circle, fill, inner sep = 1.5pt] at(1,1){};

            \draw (a)--(c)--(b)--(d)--(a)--(b);
            	
		\end{tikzpicture}\\
        $3-pan$&\begin{tikzpicture}
		  \node(a)[circle, fill, inner sep =1.5pt] at (0,0){};
            \node(b)[circle, fill, inner sep = 1.5pt] at(1,0){};
            \node(c)[circle, fill, inner sep = 1.5pt] at(0,1){};
            \node(d)[circle, fill, inner sep = 1.5pt] at(1,1){};

            \draw (a)--(c)--(b)--(d);
            \draw (a)--(b);

		\end{tikzpicture}&
        $K_{1,3}$&\begin{tikzpicture}
		  \node(a)[circle, fill, inner sep =1.5pt] at (0,0){};
            \node(b)[circle, fill, inner sep = 1.5pt] at(1,0){};
            \node(c)[circle, fill, inner sep = 1.5pt] at(0,1){};
            \node(d)[circle, fill, inner sep = 1.5pt] at(1,1){};

            \draw (a)--(b);
            \draw (a)--(c);
            \draw (a)--(d);

		\end{tikzpicture}\\
  
    \end{tabular}
    \caption{The indecomposable connected graphs on four vertices}
    \label{indecomposable4}
\end{figure}
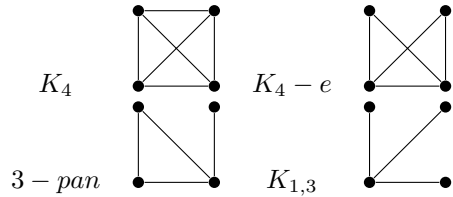

The remaining two connected graphs have non-trivial equilibrium partitions. In addition to the trivial partition, they each have only one non-trivial partition (Fig. \ref{decomposable 4}).
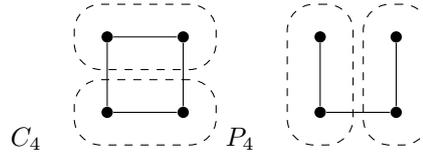
\begin{figure}[h!]
    \centering
    \begin{tabular}{cccc}
         $C_4$&\begin{tikzpicture}
		  \node(a)[circle, fill, inner sep =1.5pt] at (0,0){};
            \node(b)[circle, fill, inner sep = 1.5pt] at(1,0){};
            \node(c)[circle, fill, inner sep = 1.5pt] at(0,1){};
            \node(d)[circle, fill, inner sep = 1.5pt] at(1,1){};

            \draw (a)--(b)--(d)--(c)--(a);
            
		  \node[fit=(a)(b),dashed, draw, rectangle,rounded corners=10,inner sep=10pt] {};
            \node[fit=(c)(d),dashed, draw, rectangle,rounded corners=10,inner sep=10pt] {};
		\end{tikzpicture}
         $P_4$&\begin{tikzpicture}
		  \node(a)[circle, fill, inner sep =1.5pt] at (0,0){};
            \node(b)[circle, fill, inner sep = 1.5pt] at(1,0){};
            \node(c)[circle, fill, inner sep = 1.5pt] at(0,1){};
            \node(d)[circle, fill, inner sep = 1.5pt] at(1,1){};

            \draw (a)--(b);
            \draw (a)--(c);
            \draw (b)--(d);
            
		  \node[fit=(a)(c),dashed, draw, rectangle,rounded corners=10,inner sep=10pt] {};
            \node[fit=(b)(d),dashed, draw, rectangle,rounded corners=10,inner sep=10pt] {};
		\end{tikzpicture}\\
    \end{tabular}
    \caption{The two connected graphs on four vertices which are decomposable. }
    \label{decomposable 4}
\end{figure}
Notice that, if the vertices of $C_4$ are labeled, there are two distinct equilibrium partitions, however, they are clearly isomorphic to one another and so only count as one equilibrium partition.  

\subsection{n=5}
There are 21 connected non-isomorphic graphs of order 5. It is easy to see that every equilibrium on these graphs has either one or two clusters, again by lemma \ref{clusters of size 1}. This leaves only $11$ candidate partitions to consider for each connected graph. Of the 21 connected graphs, 13 of them are indecomposable (Fig. \ref{n5indecomposable}).

\begin{figure*}[h!]
    \centering
    \begin{tabular}{cccccc}
         $K_5$& \begin{tikzpicture}
		  \node(a)[circle, fill, inner sep =1.5pt] at (0,0){};
            \node(b)[circle, fill, inner sep = 1.5pt] at(1,0){};
            \node(c)[circle, fill, inner sep = 1.5pt] at(-0.2,1){};
            \node(d)[circle, fill, inner sep = 1.5pt] at(1.2,1){};
            \node(e)[circle, fill, inner sep = 1.5pt] at(0.5,1.6){};

            \draw (e)--(a)--(c)--(b)--(d)--(a)--(b)--(e);
            \draw (e)--(c)--(d)--(e);
            
		\end{tikzpicture} &$K_5-e$& \begin{tikzpicture}
		    \node(a)[circle, fill, inner sep =1.5pt] at (0,0){};
            \node(b)[circle, fill, inner sep = 1.5pt] at(1,0){};
            \node(c)[circle, fill, inner sep = 1.5pt] at(-0.2,1){};
            \node(d)[circle, fill, inner sep = 1.5pt] at(1.2,1){};
            \node(e)[circle, fill, inner sep = 1.5pt] at(0.5,1.6){};

            \draw (e)--(a)--(c)--(b)--(d)--(a);
            \draw (b)--(e);
            \draw (e)--(c)--(d)--(e);
		\end{tikzpicture}&
    $\overline{P_3\cup 2K_1}$&
    \begin{tikzpicture}
		    \node(a)[circle, fill, inner sep =1.5pt] at (0,0){};
            \node(b)[circle, fill, inner sep = 1.5pt] at(1,0){};
            \node(c)[circle, fill, inner sep = 1.5pt] at(-0.2,1){};
            \node(d)[circle, fill, inner sep = 1.5pt] at(1.2,1){};
            \node(e)[circle, fill, inner sep = 1.5pt] at(0.5,1.6){};

            \draw (a)--(c)--(b)--(d)--(a)--(b)--(e);
            \draw (c)--(d);
            \draw (a)--(e);
		\end{tikzpicture}\\

        $W_4$& \begin{tikzpicture}
		  \node(a)[circle, fill, inner sep =1.5pt] at (0,0){};
            \node(b)[circle, fill, inner sep = 1.5pt] at(1,0){};
            \node(c)[circle, fill, inner sep = 1.5pt] at(-0.2,1){};
            \node(d)[circle, fill, inner sep = 1.5pt] at(1.2,1){};
            \node(e)[circle, fill, inner sep = 1.5pt] at(0.5,1.6){};

            \draw (e)--(a)--(c);
            \draw (b)--(d);
            \draw (a)--(b)--(e);
            \draw (e)--(c)--(d)--(e);
            
		\end{tikzpicture} & $\overline{Claw \cup K_1}$& \begin{tikzpicture}
		    \node(a)[circle, fill, inner sep =1.5pt] at (0,0){};
            \node(b)[circle, fill, inner sep = 1.5pt] at(1,0){};
            \node(c)[circle, fill, inner sep = 1.5pt] at(-0.2,1){};
            \node(d)[circle, fill, inner sep = 1.5pt] at(1.2,1){};
            \node(e)[circle, fill, inner sep = 1.5pt] at(0.5,1.6){};

            \draw (a)--(c)--(b)--(d)--(a)--(b);
            \draw (e)--(c)--(d);
            \draw (b)--(c);
		\end{tikzpicture}&
    $\overline{P_3\cup P_2}$&
    \begin{tikzpicture}
		    \node(a)[circle, fill, inner sep =1.5pt] at (0,0){};
            \node(b)[circle, fill, inner sep = 1.5pt] at(1,0){};
            \node(c)[circle, fill, inner sep = 1.5pt] at(-0.2,1){};
            \node(d)[circle, fill, inner sep = 1.5pt] at(1.2,1){};
            \node(e)[circle, fill, inner sep = 1.5pt] at(0.5,1.6){};

            \draw (a)--(c)--(b)--(d)--(a);
            \draw (b)--(e);
            \draw (c)--(d);
            \draw (a)--(e);
		\end{tikzpicture}\\
        $3-fan$& \begin{tikzpicture}
		  \node(a)[circle, fill, inner sep =1.5pt] at (0,0){};
            \node(b)[circle, fill, inner sep = 1.5pt] at(1,0){};
            \node(c)[circle, fill, inner sep = 1.5pt] at(-0.2,1){};
            \node(d)[circle, fill, inner sep = 1.5pt] at(1.2,1){};
            \node(e)[circle, fill, inner sep = 1.5pt] at(0.5,1.6){};

            \draw (e)--(a)--(c);
            \draw (b)--(d);
            \draw (a)--(b)--(e);
            \draw (e)--(c);
            \draw (d)--(e);
            
		\end{tikzpicture} & $\overline{K_3 \cup 2K_1}$& \begin{tikzpicture}
		    \node(a)[circle, fill, inner sep =1.5pt] at (0,0){};
            \node(b)[circle, fill, inner sep = 1.5pt] at(1,0){};
            \node(c)[circle, fill, inner sep = 1.5pt] at(-0.2,1){};
            \node(d)[circle, fill, inner sep = 1.5pt] at(1.2,1){};
            \node(e)[circle, fill, inner sep = 1.5pt] at(0.5,1.6){};

            \draw (a)--(c)--(b)--(d)--(a);
            \draw (e)--(c)--(d);
            \draw (e)--(d);
            \draw (b)--(c);
		\end{tikzpicture}&
    $K_{1,4}$&
    \begin{tikzpicture}
		    \node(a)[circle, fill, inner sep =1.5pt] at (0,0){};
            \node(b)[circle, fill, inner sep = 1.5pt] at(1,0){};
            \node(c)[circle, fill, inner sep = 1.5pt] at(-0.2,1){};
            \node(d)[circle, fill, inner sep = 1.5pt] at(1.2,1){};
            \node(e)[circle, fill, inner sep = 1.5pt] at(0.5,1.6){};

            \draw (a)--(e);
            \draw (b)--(e);
            \draw (c)--(e);
            \draw (d)--(e);
		\end{tikzpicture}\\

    $\overline{fork}$& \begin{tikzpicture}
		  \node(a)[circle, fill, inner sep =1.5pt] at (0,0){};
            \node(b)[circle, fill, inner sep = 1.5pt] at(1,0){};
            \node(c)[circle, fill, inner sep = 1.5pt] at(-0.2,1){};
            \node(d)[circle, fill, inner sep = 1.5pt] at(1.2,1){};
            \node(e)[circle, fill, inner sep = 1.5pt] at(0.5,1.6){};

            \draw (a)--(c);
            \draw (b)--(d)--(a);
            \draw (a)--(b);
            \draw (c)--(d);
            \draw (e)--(c);
            
		\end{tikzpicture} & $dart$& \begin{tikzpicture}
		    \node(a)[circle, fill, inner sep =1.5pt] at (0,0){};
            \node(b)[circle, fill, inner sep = 1.5pt] at(1,0){};
            \node(c)[circle, fill, inner sep = 1.5pt] at(-0.2,1){};
            \node(d)[circle, fill, inner sep = 1.5pt] at(1.2,1){};
            \node(e)[circle, fill, inner sep = 1.5pt] at(0.5,1.6){};

            \draw (a)--(c)--(b);
            \draw (b)--(d);
            \draw (a)--(b);
            \draw (c)--(d);
            \draw (e)--(c);
		\end{tikzpicture}&
    $K_{2,3}$&
    \begin{tikzpicture}
		    \node(a)[circle, fill, inner sep =1.5pt] at (0,0){};
            \node(b)[circle, fill, inner sep = 1.5pt] at(1,0){};
            \node(c)[circle, fill, inner sep = 1.5pt] at(-0.2,1){};
            \node(d)[circle, fill, inner sep = 1.5pt] at(1.2,1){};
            \node(e)[circle, fill, inner sep = 1.5pt] at(0.5,1.6){};

            \draw (a)--(c)--(b)--(d)--(a);
            \draw (e)--(c);
            \draw (e)--(d);
            \draw (b)--(c);
		\end{tikzpicture}\\
        &&bull & \begin{tikzpicture}
		    \node(a)[circle, fill, inner sep =1.5pt] at (0,0){};
            \node(b)[circle, fill, inner sep = 1.5pt] at(1,0){};
            \node(c)[circle, fill, inner sep = 1.5pt] at(-0.2,1){};
            \node(d)[circle, fill, inner sep = 1.5pt] at(1.2,1){};
            \node(e)[circle, fill, inner sep = 1.5pt] at(0.5,1.6){};

            \draw (a)--(c);
            \draw (b)--(d)--(c);
            \draw (e)--(c);
            \draw (e)--(d);
		\end{tikzpicture}&&
         
    \end{tabular}
    \caption{Every indecomposable connected graph on 5 vertices.}
    \label{n5indecomposable}
\end{figure*}
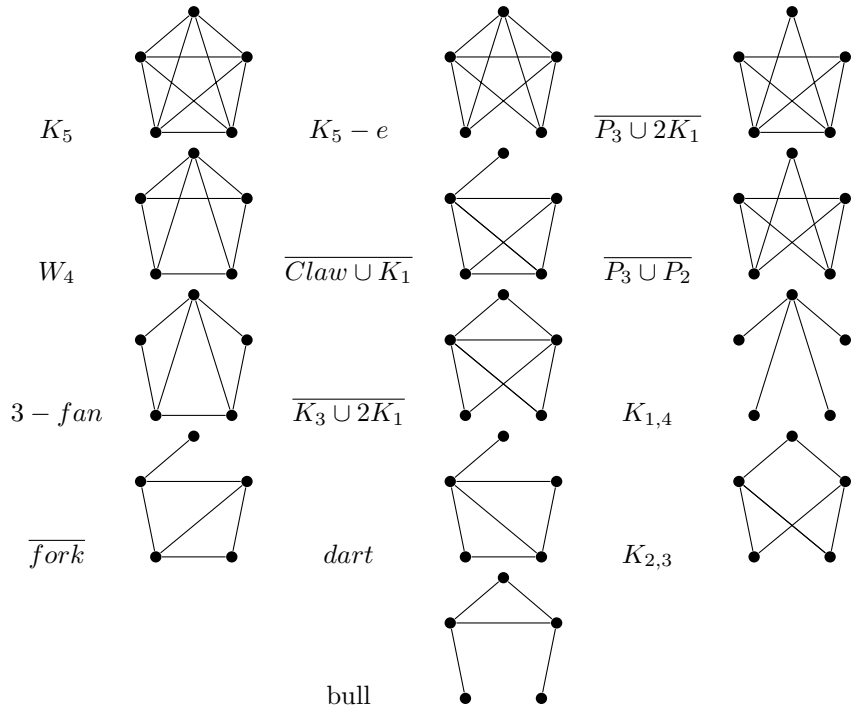

The remaining 8 are decomposable. Each obviously admits the trivial partition, as well as a unique non-trivial partition, up to isomorphism. Each of the non-trivial equilibrium partitions are pictured in Figure \ref{n5decomposable}.

\begin{figure*}[h!]
    \centering
    \begin{tabular}{cccc}
         $butterfly$& \begin{tikzpicture}
		    \node(a)[circle, fill, inner sep =1.5pt] at (0,0){};
            \node(b)[circle, fill, inner sep = 1.5pt] at(2,0){};
            \node(c)[circle, fill, inner sep = 1.5pt] at(0,1){};
            \node(d)[circle, fill, inner sep = 1.5pt] at(2,1){};
            \node(e)[circle, fill, inner sep = 1.5pt] at(1,0.5){};

            \draw (a)--(e)--(c)--(a);
            \draw (b)--(e)--(d)--(b);

            \node[fit=(a)(c)(e),dashed, draw, rectangle,rounded corners=10,inner sep=10pt] {};
            \node[fit=(b)(d),dashed, draw, rectangle,rounded corners=10,inner sep=10pt] {};
		\end{tikzpicture}&
    $fork$ &\begin{tikzpicture}
		    \node(a)[circle, fill, inner sep =1.5pt] at (0,0){};
            \node(b)[circle, fill, inner sep = 1.5pt] at(2,0){};
            \node(c)[circle, fill, inner sep = 1.5pt] at(0,1){};
            \node(d)[circle, fill, inner sep = 1.5pt] at(2,1){};
            \node(e)[circle, fill, inner sep = 1.5pt] at(1,0.5){};

            \draw (a)--(e)--(c);
            \draw (d)--(b)--(e);

            \node[fit=(a)(c)(e),dashed, draw, rectangle,rounded corners=10,inner sep=10pt] {};
            \node[fit=(b)(d),dashed, draw, rectangle,rounded corners=10,inner sep=10pt] {};
		\end{tikzpicture} \\

     $P_5$& \begin{tikzpicture}
		    \node(a)[circle, fill, inner sep =1.5pt] at (0,0){};
            \node(b)[circle, fill, inner sep = 1.5pt] at(2,0){};
            \node(c)[circle, fill, inner sep = 1.5pt] at(0,1){};
            \node(d)[circle, fill, inner sep = 1.5pt] at(2,1){};
            \node(e)[circle, fill, inner sep = 1.5pt] at(1,0.5){};

            \draw (e)--(c)--(a);
            \draw (d)--(b)--(e);

            \node[fit=(a)(c)(e),dashed, draw, rectangle,rounded corners=10,inner sep=10pt] {};
            \node[fit=(b)(d),dashed, draw, rectangle,rounded corners=10,inner sep=10pt] {};
		\end{tikzpicture}&
    $\overline{P_5}$ &\begin{tikzpicture}
		\node(a)[circle, fill, inner sep =1.5pt] at (0,0){};
            \node(b)[circle, fill, inner sep = 1.5pt] at(2,0){};
            \node(c)[circle, fill, inner sep = 1.5pt] at(0,1){};
            \node(d)[circle, fill, inner sep = 1.5pt] at(2,1){};
            \node(e)[circle, fill, inner sep = 1.5pt] at(1,0.5){};

            \draw (a)--(e);
            \draw (e)--(c)--(a)--(b);
            \draw (c)--(d)--(b);

            \node[fit=(a)(c)(e),dashed, draw, rectangle,rounded corners=10,inner sep=10pt] {};
            \node[fit=(b)(d),dashed, draw, rectangle,rounded corners=10,inner sep=10pt] {};
		\end{tikzpicture} \\

    $4-pan$& \begin{tikzpicture}
		    \node(a)[circle, fill, inner sep =1.5pt] at (0,0){};
            \node(b)[circle, fill, inner sep = 1.5pt] at(2,0){};
            \node(c)[circle, fill, inner sep = 1.5pt] at(0,1){};
            \node(d)[circle, fill, inner sep = 1.5pt] at(2,1){};
            \node(e)[circle, fill, inner sep = 1.5pt] at(1,0.5){};

            \draw (e)--(c)--(a);
            \draw (c)--(d)--(b)--(e);

            \node[fit=(a)(c)(e),dashed, draw, rectangle,rounded corners=10,inner sep=10pt] {};
            \node[fit=(b)(d),dashed, draw, rectangle,rounded corners=10,inner sep=10pt] {};
		\end{tikzpicture}&
    $co-4-pan$ &\begin{tikzpicture}
		\node(a)[circle, fill, inner sep =1.5pt] at (0,0){};
            \node(b)[circle, fill, inner sep = 1.5pt] at(2,0){};
            \node(c)[circle, fill, inner sep = 1.5pt] at(0,1){};
            \node(d)[circle, fill, inner sep = 1.5pt] at(2,1){};
            \node(e)[circle, fill, inner sep = 1.5pt] at(1,0.5){};

            \draw (a)--(e);
            \draw (e)--(c)--(a)--(b);
            \draw (d)--(b);

            \node[fit=(a)(c)(e),dashed, draw, rectangle,rounded corners=10,inner sep=10pt] {};
            \node[fit=(b)(d),dashed, draw, rectangle,rounded corners=10,inner sep=10pt] {};
		\end{tikzpicture} \\

    $cricket$& \begin{tikzpicture}
		    \node(a)[circle, fill, inner sep =1.5pt] at (0,0){};
            \node(b)[circle, fill, inner sep = 1.5pt] at(2,0){};
            \node(c)[circle, fill, inner sep = 1.5pt] at(0,1){};
            \node(d)[circle, fill, inner sep = 1.5pt] at(2,1){};
            \node(e)[circle, fill, inner sep = 1.5pt] at(1,0.5){};

            \draw (a)--(e)--(c);
            \draw (e)--(d)--(b)--(e);

            \node[fit=(a)(c)(e),dashed, draw, rectangle,rounded corners=10,inner sep=10pt] {};
            \node[fit=(b)(d),dashed, draw, rectangle,rounded corners=10,inner sep=10pt] {};
		\end{tikzpicture}&
    $C_5$ &\begin{tikzpicture}
		\node(a)[circle, fill, inner sep =1.5pt] at (0,0){};
            \node(b)[circle, fill, inner sep = 1.5pt] at(2,0){};
            \node(c)[circle, fill, inner sep = 1.5pt] at(0,1){};
            \node(d)[circle, fill, inner sep = 1.5pt] at(2,1){};
            \node(e)[circle, fill, inner sep = 1.5pt] at(1,0.5){};

            \draw (a)--(e);
            \draw (e)--(c)--(d);
            \draw (a)--(b);
            \draw (d)--(b);

            \node[fit=(a)(c)(e),dashed, draw, rectangle,rounded corners=10,inner sep=10pt] {};
            \node[fit=(b)(d),dashed, draw, rectangle,rounded corners=10,inner sep=10pt] {};
		\end{tikzpicture} \\

    \end{tabular}
    \caption{All of the decomposable graphs of order 5 with their non-trivial equilibrium partitions shown. Each has only one non-trivial equilibrium partition up to relabelling.}
    \label{n5decomposable}
\end{figure*}
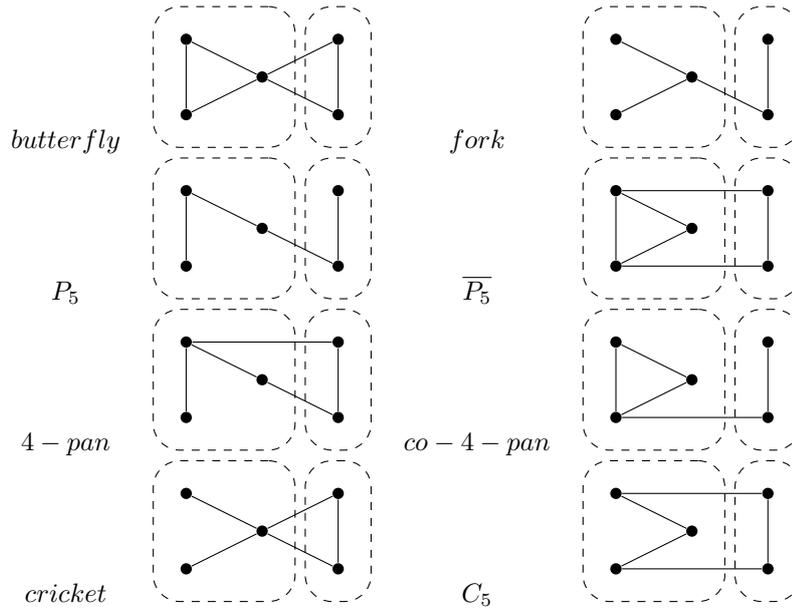

\subsection{n=6}\label{n6}
There are 112 non-isomorphic connected graphs on 6 vertices \cite{Cvetkovic1984}. Finding each of these partitions by hand is inefficient, so we relied solely on the the computer exhaustive search. The partitions are displayed visually in the online resource 1. The results of this search showed that 48 of the connected graphs on 6 vertices are indecomposable (Fig. A-1).
The remaining 64 connected graphs are decomposable. Of these, 43 have unique non-trivial equilibra. They are pictured in Figure A-2. 13 of them have exactly two non-trivial equilibrium partitions up to isomorphism (Fig. A-3), six have exactly three non-trivial equilibrium partitions, up to isomorohism (Fig. A-4), and the remaining two have exactly four non-trivial equilibrium partitions (Fig. A-5).

\subsection{n=7}
As inefficient as the visual examination method is for graphs of order 6, it is even less feasible for graphs of order 7. Therefore, we rely on the exhaustive search to gain insight into these partitions. For the obvious reason, we will not present a figure presenting all of them, either here or in the appendix,  but the code to construct the catalogue of partitions is available \href{https://zenodo.org/doi/10.5281/zenodo.11237078}{here}\cite{McAlisterStructuredCoord}.  Of the 853 graphs of order 7, 319 are indecomposable, 291 have a unique non trivial equilibrium partition. All others have multiple non-trivial equilibrium partitions including a single graph with ten distinct equilibrium partitions (Fig. \ref{graph445}). 
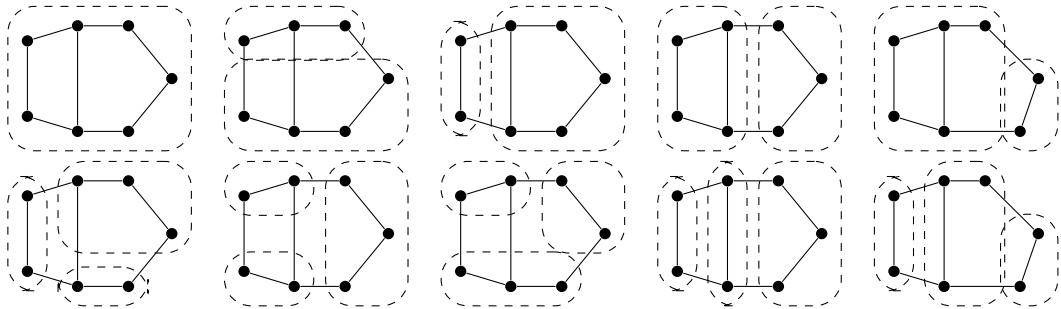
\begin{figure*}
    \centering
    \begin{tabular}{ccccc}
    \begin{tikzpicture}
            \node(a)[circle, fill, inner sep =1.5pt] at (0,0){};
            \node(b)[circle, fill, inner sep = 1.5pt] at(0.66,0.2){};
            \node(c)[circle, fill, inner sep = 1.5pt] at(1.33,0.2){};
            \node(d)[circle, fill, inner sep = 1.5pt] at(1.9,-0.5){};
            \node(e)[circle, fill, inner sep = 1.5pt] at(1.33,-1.2){};
            \node(f)[circle, fill, inner sep = 1.5pt] at(0.66,-1.2){};
            \node(g)[circle, fill, inner sep = 1.5pt] at(0,-1){};

            \draw (a)--(b)--(c)--(d)--(e)--(f)--(g)--(a);
            \draw (b)--(f);

            \node[fit=(a)(b)(c)(d)(e)(f)(g),dashed, draw, rectangle,rounded corners=10,inner sep=5pt] {};
         \end{tikzpicture}& 
    \begin{tikzpicture}
            \node(a)[circle, fill, inner sep =1.5pt] at (0,0){};
            \node(b)[circle, fill, inner sep = 1.5pt] at(0.66,0.2){};
            \node(c)[circle, fill, inner sep = 1.5pt] at(1.33,0.2){};
            \node(d)[circle, fill, inner sep = 1.5pt] at(1.9,-0.5){};
            \node(e)[circle, fill, inner sep = 1.5pt] at(1.33,-1.2){};
            \node(f)[circle, fill, inner sep = 1.5pt] at(0.66,-1.2){};
            \node(g)[circle, fill, inner sep = 1.5pt] at(0,-1){};

            \draw (a)--(b)--(c)--(d)--(e)--(f)--(g)--(a);
            \draw (b)--(f);

            \node[fit=(a)(b)(c),dashed, draw, rectangle,rounded corners=10,inner sep=5pt] {};
            \node[fit=(d)(e)(f)(g),dashed, draw, rectangle,rounded corners=10,inner sep=5pt] {};
            \end{tikzpicture}& 
    \begin{tikzpicture}
            \node(a)[circle, fill, inner sep =1.5pt] at (0,0){};
            \node(b)[circle, fill, inner sep = 1.5pt] at(0.66,0.2){};
            \node(c)[circle, fill, inner sep = 1.5pt] at(1.33,0.2){};
            \node(d)[circle, fill, inner sep = 1.5pt] at(1.9,-0.5){};
            \node(e)[circle, fill, inner sep = 1.5pt] at(1.33,-1.2){};
            \node(f)[circle, fill, inner sep = 1.5pt] at(0.66,-1.2){};
            \node(g)[circle, fill, inner sep = 1.5pt] at(0,-1){};

            \draw (a)--(b)--(c)--(d)--(e)--(f)--(g)--(a);
            \draw (b)--(f);

            \node[fit=(a)(g),dashed, draw, rectangle,rounded corners=10,inner sep=5pt] {};
            \node[fit=(b)(c)(d)(e)(f),dashed, draw, rectangle,rounded corners=10,inner sep=5pt] {};
    \end{tikzpicture}&
    \begin{tikzpicture}
            \node(a)[circle, fill, inner sep =1.5pt] at (0,0){};
            \node(b)[circle, fill, inner sep = 1.5pt] at(0.66,0.2){};
            \node(c)[circle, fill, inner sep = 1.5pt] at(1.33,0.2){};
            \node(d)[circle, fill, inner sep = 1.5pt] at(1.9,-0.5){};
            \node(e)[circle, fill, inner sep = 1.5pt] at(1.33,-1.2){};
            \node(f)[circle, fill, inner sep = 1.5pt] at(0.66,-1.2){};
            \node(g)[circle, fill, inner sep = 1.5pt] at(0,-1){};

            \draw (a)--(b)--(c)--(d)--(e)--(f)--(g)--(a);
            \draw (b)--(f);

            \node[fit=(a)(g)(b)(f),dashed, draw, rectangle,rounded corners=10,inner sep=5pt] {};
            \node[fit=(c)(d)(e),dashed, draw, rectangle,rounded corners=10,inner sep=5pt] {};
            \end{tikzpicture}&    \begin{tikzpicture}
            \node(a)[circle, fill, inner sep =1.5pt] at (0,0){};
            \node(b)[circle, fill, inner sep = 1.5pt] at(0.66,0.2){};
            \node(c)[circle, fill, inner sep = 1.5pt] at(1.2,0.2){};
            \node(d)[circle, fill, inner sep = 1.5pt] at(1.9,-0.5){};
            \node(e)[circle, fill, inner sep = 1.5pt] at(1.66,-1.2){};
            \node(f)[circle, fill, inner sep = 1.5pt] at(0.66,-1.2){};
            \node(g)[circle, fill, inner sep = 1.5pt] at(0,-1){};

            \draw (a)--(b)--(c)--(d)--(e)--(f)--(g)--(a);
            \draw (b)--(f);

            \node[fit=(a)(g)(b)(f)(c),dashed, draw, rectangle,rounded corners=10,inner sep=5pt] {};
            \node[fit=(d)(e),dashed, draw, rectangle,rounded corners=10,inner sep=5pt] {};
            \end{tikzpicture}\\

    \begin{tikzpicture}
            \node(a)[circle, fill, inner sep =1.5pt] at (0,0){};
            \node(b)[circle, fill, inner sep = 1.5pt] at(0.66,0.2){};
            \node(c)[circle, fill, inner sep = 1.5pt] at(1.33,0.2){};
            \node(d)[circle, fill, inner sep = 1.5pt] at(1.9,-0.5){};
            \node(e)[circle, fill, inner sep = 1.5pt] at(1.33,-1.2){};
            \node(f)[circle, fill, inner sep = 1.5pt] at(0.66,-1.2){};
            \node(g)[circle, fill, inner sep = 1.5pt] at(0,-1){};

            \draw (a)--(b)--(c)--(d)--(e)--(f)--(g)--(a);
            \draw (b)--(f);

            \node[fit=(a)(g),dashed, draw, rectangle,rounded corners=10,inner sep=5pt] {};
            \node[fit=(b)(c)(d),dashed, draw, rectangle,rounded corners=10,inner sep=5pt] {};
            \node[fit=(f)(e),dashed, draw, rectangle,rounded corners=10,inner sep=5pt] {};
            \end{tikzpicture}&
    \begin{tikzpicture}
            \node(a)[circle, fill, inner sep =1.5pt] at (0,0){};
            \node(b)[circle, fill, inner sep = 1.5pt] at(0.66,0.2){};
            \node(c)[circle, fill, inner sep = 1.5pt] at(1.33,0.2){};
            \node(d)[circle, fill, inner sep = 1.5pt] at(1.9,-0.5){};
            \node(e)[circle, fill, inner sep = 1.5pt] at(1.33,-1.2){};
            \node(f)[circle, fill, inner sep = 1.5pt] at(0.66,-1.2){};
            \node(g)[circle, fill, inner sep = 1.5pt] at(0,-1){};

            \draw (a)--(b)--(c)--(d)--(e)--(f)--(g)--(a);
            \draw (b)--(f);

            \node[fit=(a)(b),dashed, draw, rectangle,rounded corners=10,inner sep=5pt] {};
            \node[fit=(g)(f),dashed, draw, rectangle,rounded corners=10,inner sep=5pt] {};
            \node[fit=(c)(d)(e),dashed, draw, rectangle,rounded corners=10,inner sep=5pt] {};
            \end{tikzpicture}&
    \begin{tikzpicture}
            \node(a)[circle, fill, inner sep =1.5pt] at (0,0){};
            \node(b)[circle, fill, inner sep = 1.5pt] at(0.66,0.2){};
            \node(c)[circle, fill, inner sep = 1.5pt] at(1.33,0.2){};
            \node(d)[circle, fill, inner sep = 1.5pt] at(1.9,-0.5){};
            \node(e)[circle, fill, inner sep = 1.5pt] at(1.33,-1.2){};
            \node(f)[circle, fill, inner sep = 1.5pt] at(0.66,-1.2){};
            \node(g)[circle, fill, inner sep = 1.5pt] at(0,-1){};

            \draw (a)--(b)--(c)--(d)--(e)--(f)--(g)--(a);
            \draw (b)--(f);

            \node[fit=(a)(b),dashed, draw, rectangle,rounded corners=10,inner sep=5pt] {};
            \node[fit=(g)(f)(e),dashed, draw, rectangle,rounded corners=10,inner sep=5pt] {};
            \node[fit=(c)(d),dashed, draw, rectangle,rounded corners=10,inner sep=5pt] {};
            \end{tikzpicture}&
    \begin{tikzpicture}
            \node(a)[circle, fill, inner sep =1.5pt] at (0,0){};
            \node(b)[circle, fill, inner sep = 1.5pt] at(0.66,0.2){};
            \node(c)[circle, fill, inner sep = 1.5pt] at(1.33,0.2){};
            \node(d)[circle, fill, inner sep = 1.5pt] at(1.9,-0.5){};
            \node(e)[circle, fill, inner sep = 1.5pt] at(1.33,-1.2){};
            \node(f)[circle, fill, inner sep = 1.5pt] at(0.66,-1.2){};
            \node(g)[circle, fill, inner sep = 1.5pt] at(0,-1){};

            \draw (a)--(b)--(c)--(d)--(e)--(f)--(g)--(a);
            \draw (b)--(f);

            \node[fit=(a)(g),dashed, draw, rectangle,rounded corners=10,inner sep=5pt] {};
            \node[fit=(b)(f),dashed, draw, rectangle,rounded corners=10,inner sep=5pt] {};
            \node[fit=(c)(d)(e),dashed, draw, rectangle,rounded corners=10,inner sep=5pt] {};
            \end{tikzpicture}&
    \begin{tikzpicture}
            \node(a)[circle, fill, inner sep =1.5pt] at (0,0){};
            \node(b)[circle, fill, inner sep = 1.5pt] at(0.66,0.2){};
            \node(c)[circle, fill, inner sep = 1.5pt] at(1.2,0.2){};
            \node(d)[circle, fill, inner sep = 1.5pt] at(1.9,-0.5){};
            \node(e)[circle, fill, inner sep = 1.5pt] at(1.66,-1.2){};
            \node(f)[circle, fill, inner sep = 1.5pt] at(0.66,-1.2){};
            \node(g)[circle, fill, inner sep = 1.5pt] at(0,-1){};

            \draw (a)--(b)--(c)--(d)--(e)--(f)--(g)--(a);
            \draw (b)--(f);

            \node[fit=(a)(g),dashed, draw, rectangle,rounded corners=10,inner sep=5pt] {};
            \node[fit=(b)(f)(c),dashed, draw, rectangle,rounded corners=10,inner sep=5pt] {};
            \node[fit=(d)(e),dashed, draw, rectangle,rounded corners=10,inner sep=5pt] {};
            \end{tikzpicture}     
    \end{tabular}
    \caption{The ten distinct equilibrium partitions of the graph $X_{38}$, which is graph \#445 in the graph atlas. This is the only graph among those catalogued which admits ten distinct equilibrium partitions.}
    \label{graph445}
\end{figure*}

\subsection{Catalogue Summary}
In total there are 996 connected graphs on at most 7 vertices and so there are 996 trivial equilibrium partitions. There are 597 graphs which admit non-trivial equilibrium partitions. The majority of these (350) have exactly 1 non-trivial equilibrium, 141 have two non-trivial equilibria, 62 have three non-trivial equilibra, and 55 have four or more non-trivial equilibria. This data is separated by graph size in Table \ref{tab:CountPartitionNumber}. 

Examining all 2083 partitions, we find that 996 of them are trivial (clearly because there are 996 different graphs in the catalogue), 897 have 2 parts, and 190 have three parts. We know this is the upper limit for graphs this small by lemma \ref{clusters of size 1}. This data is separated by graph size in table \ref{NumberClusters}



\begin{table*}[]
    \centering
    \begin{tabular}{c|cccccccccc}
         &\multicolumn{10}{|c}{number of graphs with $n$ partitions}  \\
         \hline 
         Graph size&1&2&3&4&5&6&7&8&9&10\\
         \hline 
         1&1&0&0&0&0&0&0&0&0&0\\
         2&1&0&0&0&0&0&0&0&0&0\\
         3&2&0&0&0&0&0&0&0&0&0\\
         4&4&2&0&0&0&0&0&0&0&0\\
         5&13&8&0&0&0&0&0&0&0&0\\
         6&48&43&13&6&2&0&0&0&0&0\\
         7&319&297&128&56&25&15&8&3&1&1\\
         \hline
         Total&399&350&141&62&27&15&8&3&1&1
    \end{tabular}
    \caption{A table showing the number of connected graphs which admit $n$ different equilibrium partitions for $n$ from 1 to 10, up to isomorphism. Among graphs of size less or equal to seven, there are no graphs which admit more than 10 different partitions.}
    \label{tab:CountPartitionNumber}
\end{table*}

\begin{table*}[h!]
    \centering
    \begin{tabular}{c|cccccc}
         & \multicolumn{6}{|c}{number of partitions with $n$ parts} \\
         \hline 
         Graph Size&\multicolumn{2}{|c}{1}&\multicolumn{2}{c}{2}&\multicolumn{2}{c}{3}\\
         \hline 
        1&1&100\%&0&0\%&0&0\%\\
        2&1&100\%&0&0\%&0&0\%\\
        3&2&100\%&0&0\%&0&0\%\\
        4&6&75\%&2&25\%&0&0\%\\
        5&21&72\%&8&27\%&0&0\%\\
        6&112&54\%&79&38\%&16&8\%\\
        7&853&46\%&808&44\%&174&9\%\\
        \hline
        Total& 996&48\%&897&43\%&190&9\%
    \end{tabular}
    \caption{A table showing the number of distinct partitions with $n$ clusters for each graph size from 1 to 6 vertices. }
    \label{NumberClusters}
\end{table*}

A crucial observation is that indecomposability becomes less common as graph order increases. We cannot yet rigoruosly extend this observation past $n=7$ but, through visual examination of the catalogue, we can observe that indecomposability depends greatly the particular topology of the graph and is clearly less common when the edge number of a graph decreases. We will revisit this observation in Section \ref{Discussion} to formulate some conjectures about indecomposability in infinitely large $\Gamma_{n,N(n)}$ graphs. 

\section{Larger Simulation}\label{Simulation}
Having a complete catalogue of equilibria is helpful for small graphs where the equilibria are easy to draw. However, it becomes unhelpful and unreasonably time intensive to make a catalogue for larger graphs. Instead, we turn to simulation in order to find equilibria. Because of the growth of Bell's numbers, we no longer simply check all partitions in search of equilibrium partitions; instead, we search for equilibrium partitions by considering the initial value problem \eqref{dsys}. By producing random initial conditions, we can run the myopic best response process until the strategy profile ends in an equilibrium. Using this strategy, there are several things to consider: The first is that solutions may terminate in cyclical behavior where $u(t+n)=u(t)$ for all $t>T$ and $n>1$. These states are interesting and worth studying, but they are not equilibrium partitions. Additionally, we must contend with the basins of stability of different equilibria. It may be that case that a graph admits several equilibrium partitions, but the basin of stability for one equilibria is so large that it is unlikely for randomly generated initial data to result in a solution which tends towards a different equilibrium. 

In order to discuss the basin of stability for a graph, we introduce the notation $\phi:\mathcal{P}(\mathcal{Q}_G)\rightarrow [0,1]$, where $\mathcal{Q_G}$ is the set of vertex partitions of $G$, $\mathcal{P}(\mathcal{Q}_G)$ is the power set of $\mathcal{Q}_G$, and $\phi((Q_i)_{i=1}^n)$ is the probability that, given a random initial strategy profile, the IVP will converge to the sequence $(Q_i)_{i=1}^n$. In the case that we are interested in the convergence to a partition $Q$, we write $\phi(Q)$. This means that $\sum_{(Q_i)\in\mathcal{P}(\mathcal{Q})}\phi((Q_i))\leq1$ (with equality when a graph $G$ does not admit any non-convergent solutions). Moreover, if $Q^0$ is the trivial partition (which corresponds to the consensus equilibrium) and $G$ is finite, then $\phi(Q_G^0)=1\implies $ $G$ is indecomposable. This relationship is not true in the case that $G$ is infinite because we must contend with events of measure zero. Note that the converse is not true even when $G$ is finite because, by definition, a graph my be indecomposable but still admit cycling (e.g, $K_{1,n}$).

\subsection{Basins of Stability for consensus equilibria}
In the first of two simulations, we attempt to understand the size of the basin of stability of the trivial equilibrium, $\phi(Q^0)$, in many different graphs. Using the Erd\H{o}s-R\'enyi (ER) algorithm to generate 10,000 random graphs with various sizes and ednge densities, we produced solutions from 500 randomly selected initial strategy profiles, and record the limiting behavior of the solution. In total, approximately 5 million solutions were attempted and their limits were measured. It is important to note that the basins of stability in this sense are not exactly partitions of the state space but rather partitions of unity over the state space. Because of the stochastic tie breaking, basins of stability overlap. Therefore, when we measure the relative size of a basin of stability, it is not just the cardinality of a subset of the state space divided by the size of the state space itself. Instead, it is a weighted average over the entire state space of the probability that that state will evolve to the equilibrium in questions.   
\begin{figure}[h!]
    \centering
    \includegraphics[width = \linewidth]{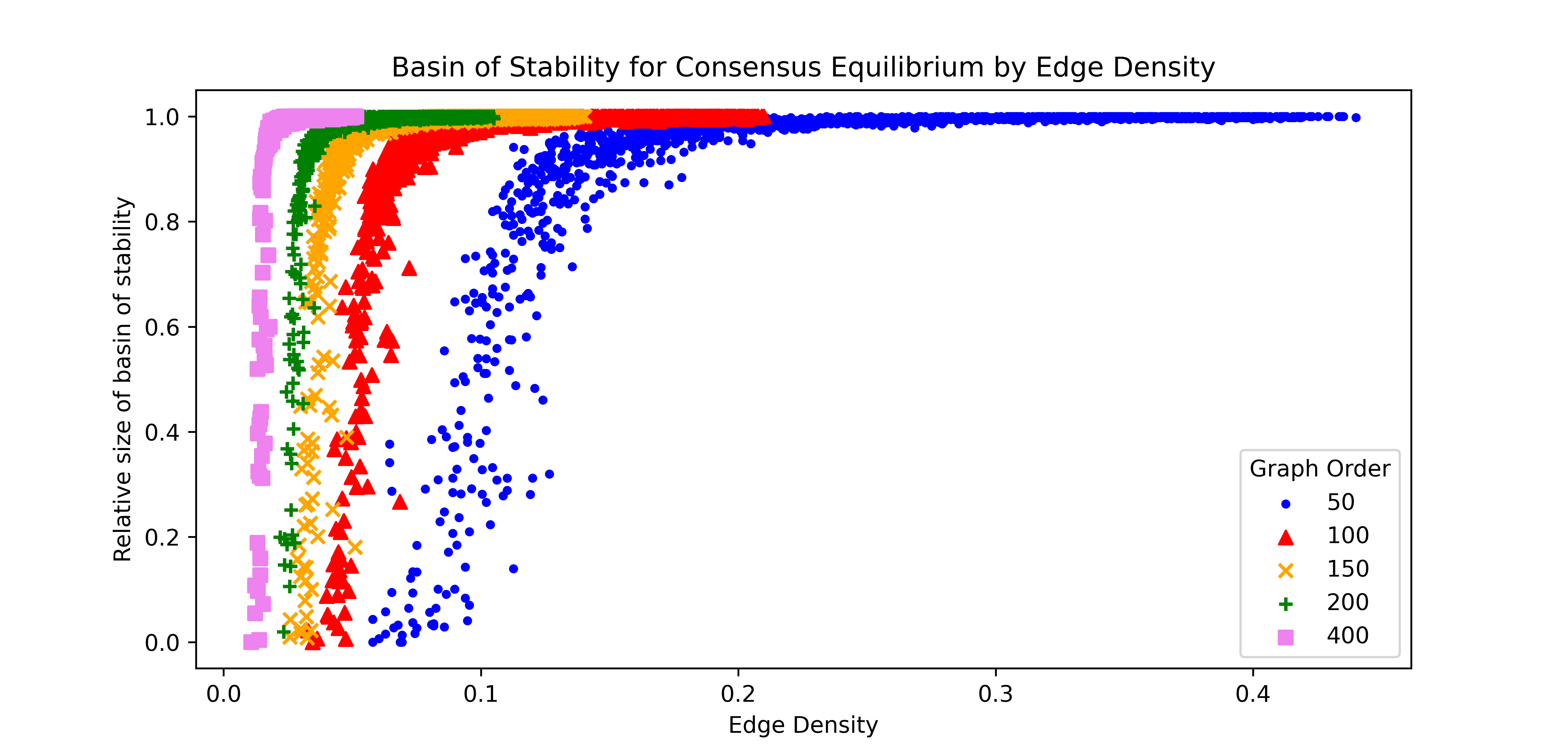}
    \includegraphics[width =\linewidth]{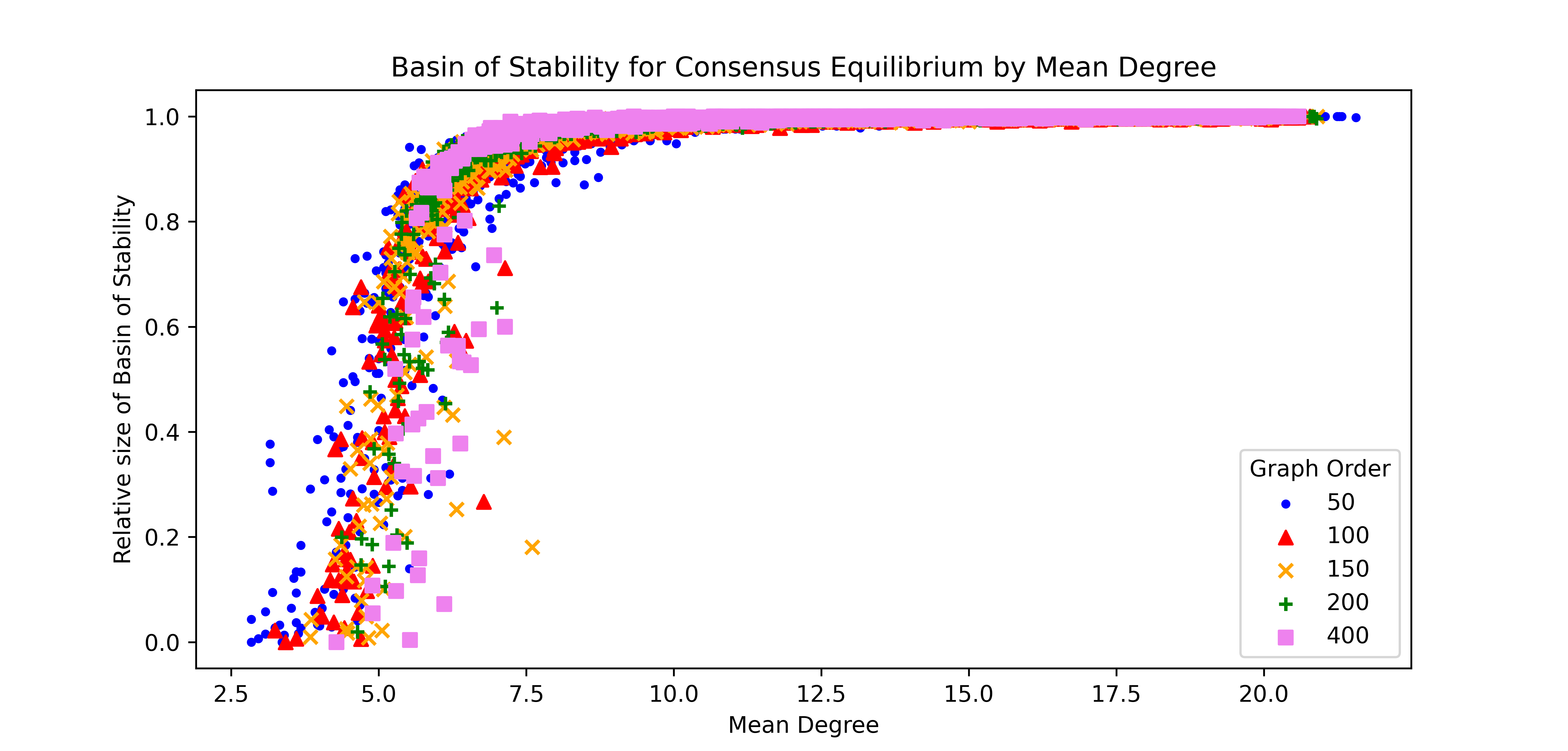}
    \caption{\textbf{Top:} A scatter plot showing every random graph with edge density on the x-axis and relative proportion of the time the consensus equilibrium was reached on the y-axis. \textbf{Bottom:} A similar scatter plot with mean degree on the x-axis instead of edge density. Both plots show the clear trend that the probability of arriving at the consensus equilibrium increases as ``connectivity" (in whatever sense we consider) increases.}
    \label{BasinSimSummary}
\end{figure}

The main result of this simulation was the apparent increase in  $\phi(Q^0)$ with the increase in connectivity of the graph (Fig.\ref{BasinSimSummary}). Something equally interesting but more subtle is that when recorded against edge density, the data is highly stratified by graph order. If instead the data is measured against mean degree, this stratification is no longer apparent. This is consistent with the fact that the Nash equilibrium is a highly local property. The property of being in equilibrium depends only on those vertices with which a vertex shares an edge and does not depend directly on the order of the graph itself.

Included in this simulation are only those ER graphs which were connected. We can say certainly that the probability that a disconnected graph arrives at the consensus equilibrium is bounded above by $\frac{1}{2}$. This is because, even if each connected component will surely evolve to a consensus equilibrium of its own, the likelihood of each connected component evolving to the same consensus equilibrium is less than 1. Indeed, if there are $r$ connected components in the graph $G$, even if every connected component will certainly evolve to a consensus, if the initial conditions are given uniformly randomly, the probability of converging to a global consensus is only $\phi(Q^0)=\frac{1}{r^{c-1}}$, where $c$ is the number of strategies present in the initial condition. In every non-trivial case, where $r>1$, this probability is less than $\frac{1}{2}$.

$\phi(Q^0)$ does seem to be related to the probability of being connected in the ER graph given by the recursive expression
\begin{equation}
\mathbb{P}(\Gamma_{n,p}\in C)= 1-\sum_{i=1}^{n-1}\mathbb{P}(\Gamma_{i,p}\in C){n-1\choose i-1}(1-p)^{i(n-i)}    
\end{equation}
where $\Gamma_{n,p}$ is the random ER graph with parameters $n$ and $p$ ($p$ is the probability that any two vertices share an edge), and $C$ is the set of all graphs which are connected.  When $\Gamma_{n,p}$ with the same parameters is very likely to be connected, then we observe that $\phi(Q^0)$ is close to 1 (Fig \ref{BasinSimConnectedness}). We consider this relationship further in section \ref{Discussion}.
\begin{figure}[h!]
    \centering
    \includegraphics[width = \linewidth]{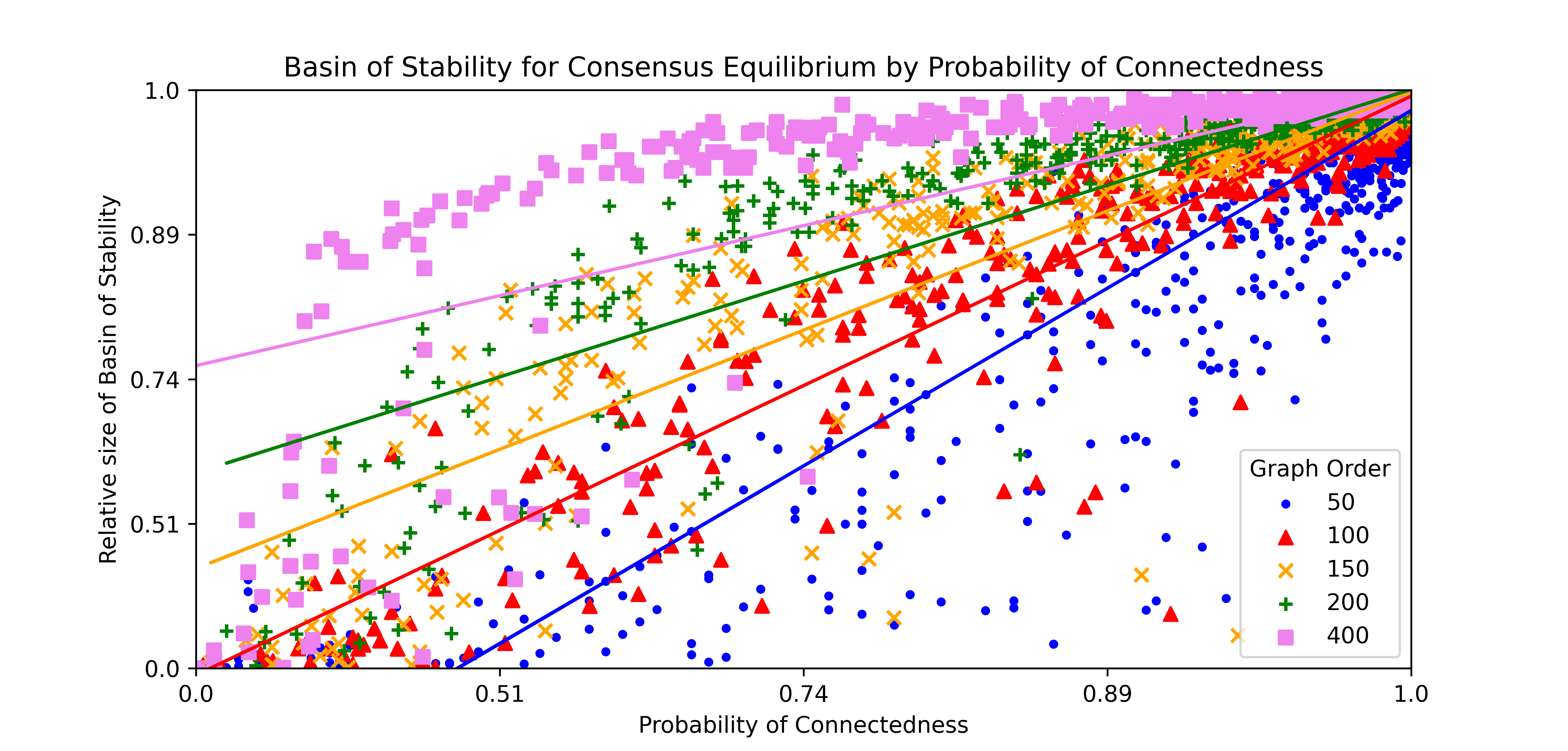}
    \caption{A scatter plot showing the relative size of the basin of stability as it is related to the probability of connectedness, shown on exponentially transformed axes. Trend lines are included for each series ($n=50,100,150,200,400$ from bottom to top), not because we hypothesize that this relationship is linear, but in order to demonstrate that this relationship is supported even though certain parts of the data are more dense than others. }
    \label{BasinSimConnectedness}
\end{figure}

\subsection{Broader Simulations}
The basin of stability simulation tells us about how the the graphical structure can promote or suppress consensus equilibrium, but it tells us less about particular solutions to the IVP and their limits. In order to understand the limit of the IVP a little better, we generated 1 million Erd\H{o}s-R\'enyi random graphs, discarded the disconnected graphs, and found solutions to the IVP with randomly generated initial data to give us a (lower resolution) picture of a wider range of graphs. In particular, this simulation demonstrates better how graph size effects a broader range of limiting behaviors.

\begin{figure}
    \centering
    \includegraphics[width = \linewidth]{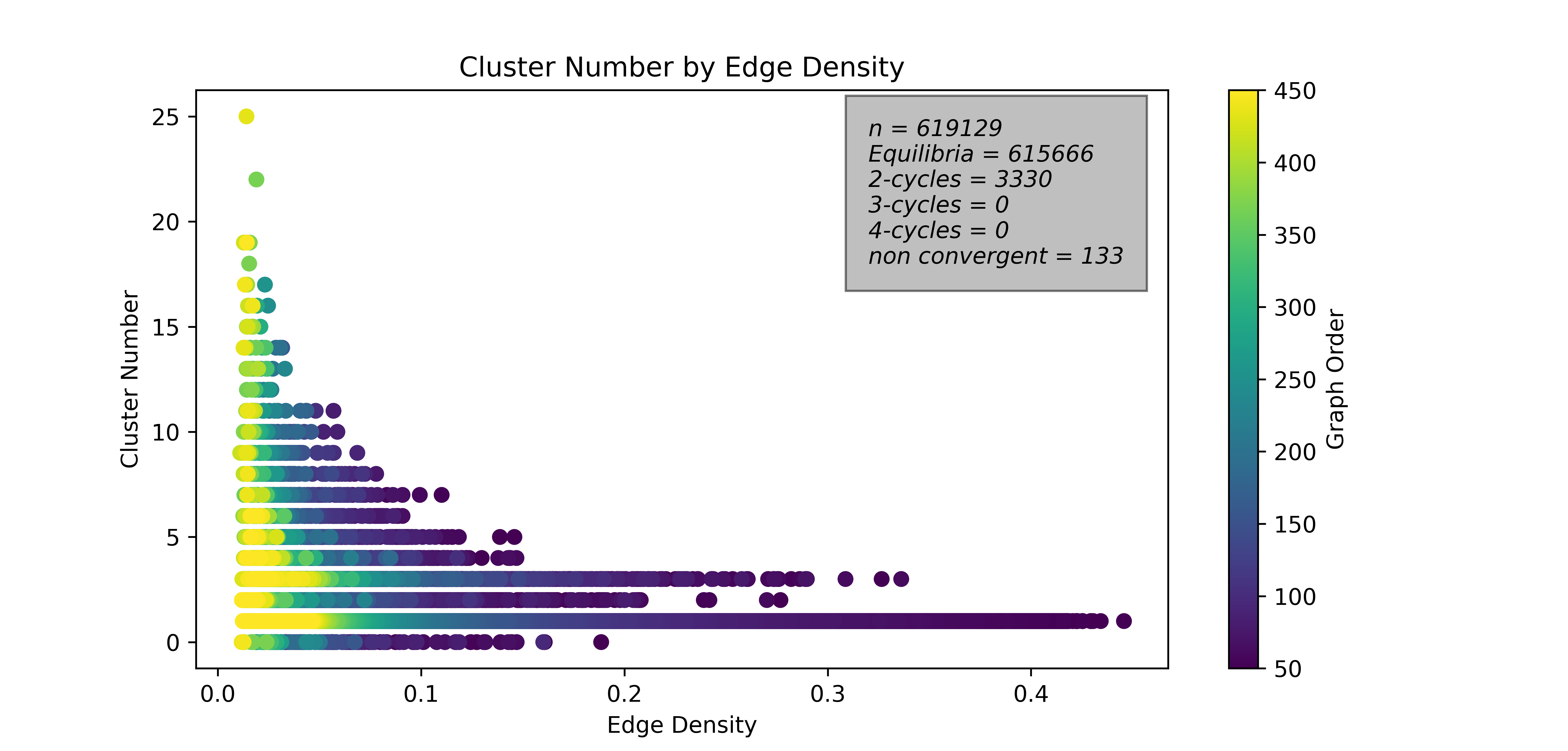}
    \caption{A scatter plot of cluster number (which is the number of distinct strategies present at equilibrium) as it relates to the edge density of the graph. This plot also shows that of the 619,129 IVPs that were simulated, 614,666 of them ended in an equilibrium, 3330 ended in a 2-cycle, and 113 did not converge by the end of the simulation. There were no 3-cycles or 4-cycles observed across either of the simulations. Any point with a cluster number of 0 corresponds to a solution which did not converge. }
    \label{BroadSimMainFig}
\end{figure}

Cluster number is the number is strategies present at equilibrium, or in the language of partitions, it is simply the number of parts to the equilibrium partition. When plotted against edge density of the graph by which the equilibrium was admitted, we see a story consistent with that of the basin of stability simulation (Fig \ref{BroadSimMainFig}). When edge density increases, the number of strategies present at equilibrium tends towards 1 (which corresponds to the consensus equilibrium). Additionally, this simulation shows a little more about how this interaction looks on the \textit{graph order}$\times$ \textit{mean degree} plane (which is a transformation of the parameter space for the ER graph random variable $\Gamma_{n,p}$) (Fig. \ref{BroadSimHeatMaps}).

\begin{figure}
    \centering
    \includegraphics[width=0.85\linewidth]{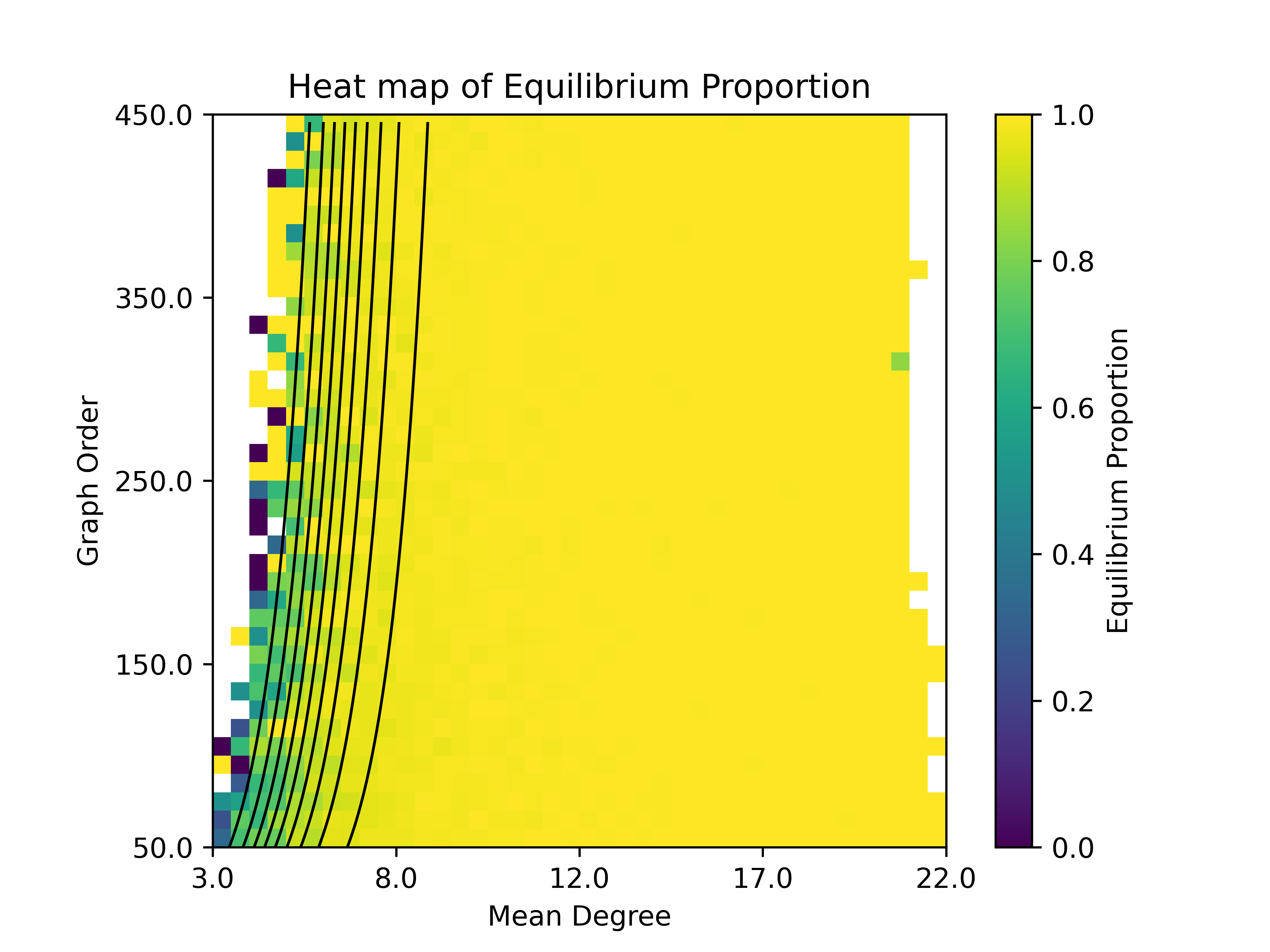}
    \includegraphics[width = 0.85\linewidth]{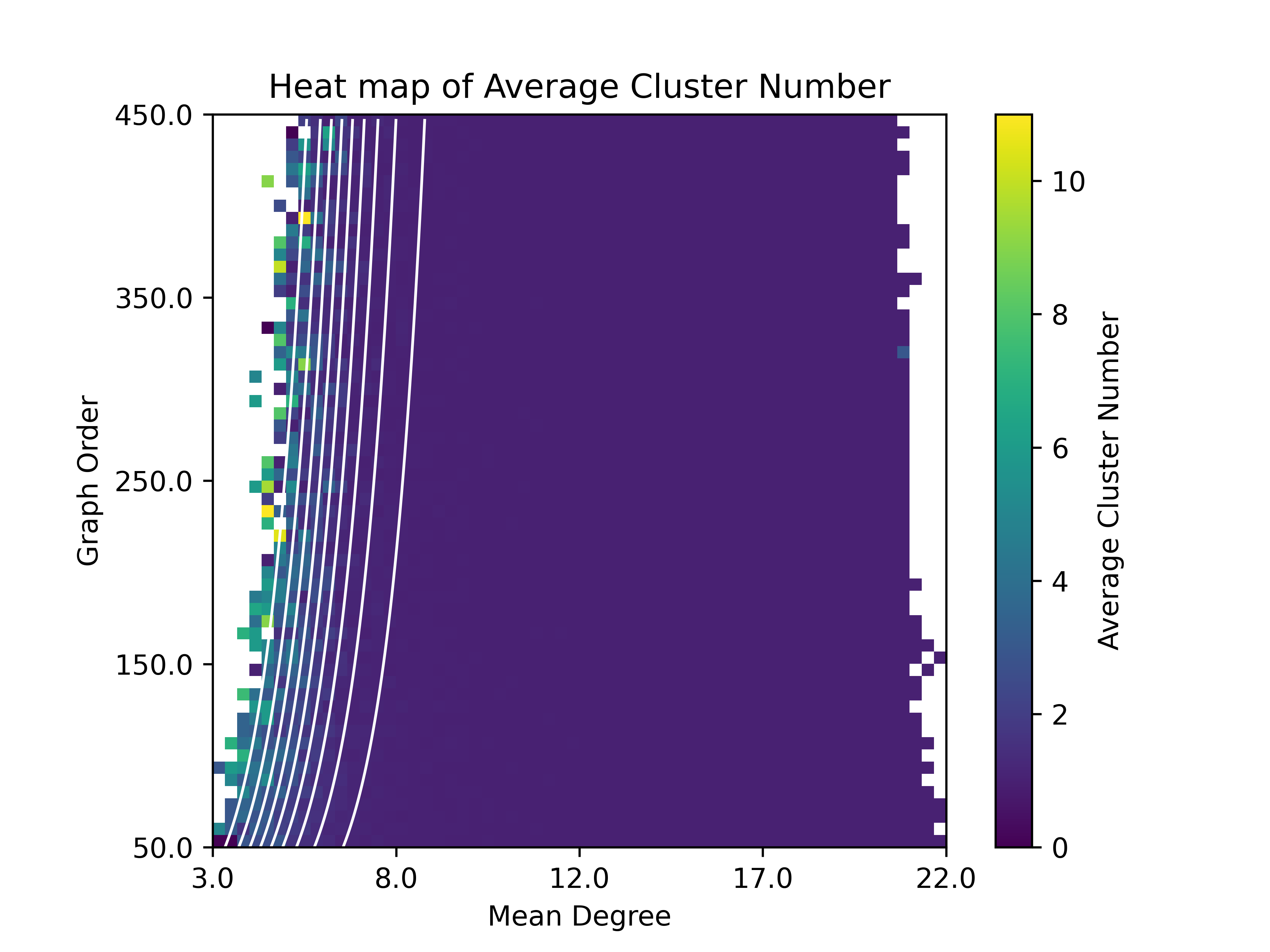}
    \caption{\textbf{Top:} a heat map showing the frequency of converging to an equilibrium for each combination of graph order and mean degree. Clearly, the probability of converging to an equilibrium approaches 1 as mean degree increases for any choice of graph order. \textbf{Bottom:} a heat map showing the average cluster number for each choice of graph order and mean degree. On both images are superimposed level curves of the asymptotic probability of connectedness ranging from $p=0.1$ to $p=0.9$}
    \label{BroadSimHeatMaps}
\end{figure}

When visualized on the on the \textit{graph order}$\times$ \textit{mean degree} plane, the non-trivial dynamics seem only to occur near the frontier where it becomes exceedingly unlikely for an ER graph to be connected. A foundational result to the study of random graphs is the asymptotic probability of connectedness \cite{Karonski1982}. For the purpose of visualizing the results, we have superimposed level sets of the approximated probability of of connectedness by way of the sharp threshold functions found in \cite{Erdos1959} and \cite{erdHos1960evolution}.  It is clear that these level sets divide the plane into regions with qualitatively different behavior. We examine the results of the simulation though the lens of connectedness probability further in section \ref{Discussion}.

The other notable result form this simulation is that, although 2 cycles were relatively common, three and four cycles were entirely absent. Among all the IVPs across both simulations there was never an $n$-cycle for $n>2$. (Although among the 133 solutions that did not converge to a recognized equilibrium before the simulation timed out, there was a common behavior which mimicked a 4 cycle, which is further discussed in section \ref{Discussion}.)

\section{Discussion}\label{Discussion}

The ultimate goal of this endeavor is not to have a statistical understanding of this system, but rather an analytical understanding. As commented on in section \ref{Background}, there are few general results about equilibria of this system or the behavior of solutions to the IVP. However, with the help of the the numerical results, we propose and discuss three conjectures here. The first is that when graphs grow infinitely large, the probability of an ER graph being indecomposable approaches zero so long as the number of edges is below some threshold function. The second is that as graphs grow infinitely large, $\phi(Q_0)$ converges to 1 so long as the number of edges is above some threshold function. These two conjectures together would imply the existence of a region wherein infinitely large graphs are almost surely decomposable but the non-trivial equilibria are almost never discoverable. The last is that $n$-cycles are impossible for $n>2$. 

These conjectures will help shape the way we understand this critical case of the coordination game and will, along with the results of the present numerical treatment, inform our understanding of the application areas of this system. For instance, if conjecture 2 is shown to be true, it would imply that, for the canonical example of a growing company selecting between operating systems, there is a critical amount of interconnectedness, above which the company will eventually converge to a consensus, almost surely, without intervention. If conjecture 3 is proven, it may give us further theoretical support to the idea that cyclical trends in fashion cannot be described only by individuals choosing to mimic one another \cite{Alberto2012}.

\subsection{Indecomposablity becomes asymptotically rare}
\begin{conjecture}
    There is a function $A(n)$ which is monotonic increasing with $\lim_{n\rightarrow \infty} A(n)=\infty$ so that, if $P_{id}(n,N(n))$ is the probability that $\Gamma_{n,N(n)}$ is indecomposable, 
    \begin{equation}
       \lim_{n\rightarrow \infty} P_{id}(n,N(n))=\begin{cases}
            0 &\lim_{n\rightarrow \infty}\frac{N(n)}{A(n)}=0\\
            1 &\lim_{n\rightarrow \infty}\frac{N(n)}{A(n)}=\infty
        \end{cases}
    \end{equation}
\end{conjecture}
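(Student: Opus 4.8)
The plan is to treat decomposability as the existence of a combinatorial structure in $\Gamma_{n,N(n)}$ and to locate $A(n)$ as a first-moment threshold, proving the two directions by the first- and second-moment methods respectively. First I would record the working characterization that falls out of the definitions: a partition $Q=\{q^c\}$ is an equilibrium partition exactly when every vertex $v\in q^c$ satisfies $|\Gamma(v)\cap q^c|\ge |\Gamma(v)\cap q^{c'}|$ for every other part $q^{c'}$, i.e. every vertex has weakly at least as many neighbours inside its own part as inside any competing part (ties being permitted by the $\varepsilon$-inertia rule). By Lemma \ref{clusters of size 1} and Lemma \ref{connected clusters} it suffices to search over partitions whose parts are connected and of size at least $2$, so a graph is decomposable iff at least one such non-trivial partition is plurality-stable in the sense above.

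For the dense direction ($\lim N/A=\infty$, where I must show $P_{id}\to 1$) I would run a union bound. Let $Z$ count the non-trivial equilibrium partitions and write $\mathbb{E}[Z]=\sum (\#\text{partitions of a given shape})\cdot \mathbb{P}(\text{plurality-stable})$, the sum ranging over the number of parts $m$ (from $2$ to $\lfloor n/2\rfloor$) and over part-size profiles. For a fixed partition the edges are independent, so $\mathbb{P}(\text{plurality-stable})$ factors, up to the weak correlations coming from shared edges, into a product over vertices of a per-vertex probability estimated by a local central limit / Poisson comparison of the relevant neighbour-count differences. The programme here is to show that above $A(n)$ the entropy of choosing the partition (of order $2^{n}$ for balanced bipartitions) is beaten by the per-vertex stability cost, so that $\mathbb{E}[Z]\to 0$ and hence $Z=0$ almost surely; the densest-surviving shape (heuristically the balanced bipartition) is what pins down $A(n)$. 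The endpoint $K_n$, which is indecomposable by Theorem \ref{KnTheorem}, is the consistency check that $Z=0$ in the limit.

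For the sparse direction ($\lim N/A=0$, show $P_{id}\to 0$) I would split at the connectivity threshold. Well below it, $\Gamma_{n,N(n)}$ is disconnected almost surely, and the partition into connected components is already a non-trivial equilibrium partition (each vertex has all of its neighbours in its own part), so decomposability is immediate. In the remaining window, where the graph is typically connected but still sparse relative to $A(n)$, I would produce an equilibrium partition by a second-moment argument on balanced bipartitions: show $\mathbb{E}[Z]\to\infty$ and $\mathrm{Var}(Z)/\mathbb{E}[Z]^2\to 0$, which forces $Z>0$ and hence decomposability almost surely.

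The hard part will be this second-moment estimate, and more generally matching the upper (first-moment) threshold with a genuine existence threshold so that a single $A(n)$ governs both directions. Equilibrium partitions are globally coupled objects --- every vertex must be simultaneously satisfied --- so two candidate partitions that overlap in most of their parts are strongly positively correlated, and I expect the dominant contribution to $\mathrm{Var}(Z)$ to come from such near-aligned pairs; controlling this correlation sum is where the argument is most likely to break, exactly as it does for other ``every vertex is locally happy'' structures (satisfactory and cohesive partitions). A secondary difficulty is uniformity: the union bound in the dense direction and the variance bound in the sparse direction must be taken over exponentially many partition shapes at once, so I would need tail estimates on the per-vertex stability probability that are uniform in the number and sizes of the parts before the shape-by-shape heuristic can be promoted to a proof.
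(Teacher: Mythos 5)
This statement is presented in the paper as a \emph{conjecture}: the authors offer no proof, only supporting evidence (the small-graph catalogue, the fact that disconnected graphs are always decomposable so that any lower threshold for connectivity is a trivial lower threshold for decomposability, and Theorems \ref{KnTheorem} and \ref{KnnTheorem} showing that $K_n$ and $K_n-e$ are indecomposable). So there is no proof in the paper to compare yours against, and your submission is itself a programme rather than a proof --- you acknowledge as much. The sparse direction of your plan below the connectivity threshold is fine and is exactly the paper's own observation; the honest assessment concerns the rest.

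The concrete gap is in the dense direction. For a balanced bipartition of $\Gamma_{n,p}$, the per-vertex probability of having weakly at least as many neighbours inside as outside is at least $\tfrac12$ by symmetry, plus a tie term of order $(np)^{-1/2}$. Even ignoring correlations, the first moment over balanced bipartitions is then at least roughly $\binom{n}{n/2}\,2^{-n}\,e^{\Theta(n/\sqrt{np})}$, which does \emph{not} tend to zero for any $p$ bounded away from $1$; indeed, recent work on friendly bisections shows that $G(n,\tfrac12)$ admits such a partition with high probability. So the union bound cannot deliver $P_{id}\to 1$ for any $A(n)=o(n^2)$, and the entropy-versus-stability tradeoff you describe never tips in favour of indecomposability until the graph is nearly complete. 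This is consistent with the paper's own stated belief that the threshold, if it exists, sits near $\binom{n}{2}-cn$ --- but at that density the ratio formulation $N(n)/A(n)\to\infty$ in the conjecture is essentially unattainable (since $N(n)\le\binom{n}{2}$), which is a defect of the conjecture's statement rather than of your argument. The second-moment step in the sparse-but-connected window is the other acknowledged open problem: equilibrium partitions are globally constrained objects and the variance is dominated by near-aligned pairs of partitions, exactly the obstruction that makes satisfactory-partition thresholds hard. In short: your plan correctly identifies the natural tools, but neither direction closes, and the dense direction as written would prove the wrong thing for any subquadratic $A(n)$.
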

From the catalogue we start to see that as $n$ grows, a smaller proportion of graphs of order $n$ are indecomposable. Clearly, we cannot draw meaningful asymptotic results from observations of $n\in \{1,2,...,7\}$, but we can support this conjecture with some facts about the system. First it is clear that any disconnected graph is decomposable. This means immediately that any lower threshold function (a la \cite{erdHos1960evolution}) for connectedness, like $A(n)=\frac{1}{2}n\log n$, is also trivially a lower threshold function for indecomposability. That is, if we consider random graphs $\Gamma_{n,N(n)}$ and let $n\rightarrow \infty$, the probability that $\Gamma_{n,N(n)}$ is indecomposable goes to zero when $\lim \frac{N(n)}{A(n)}= 0$.

We propose that a threshold function exists and is much stronger than the trivial lower threshold function. It is not true that every graph of size $n$ is indecomposable when $n\rightarrow \infty$. For instance, $K_n$ and $K_n-e$ (for $n>3$) are always indecomposable, as is $K_{n,m}$ when $n$ and $m$ are coprime (recall theorems \ref{KnTheorem} and \ref{KnnTheorem}). Obviously then, if $N(n)\geq {n \choose 2}-1$, then $\Gamma_{n,N(n)}$ is indecomposable. Certainly we know that if a threshold function exists it will satisfy $\lim\frac{{n\choose 2}-1}{A(n)}= \infty$ and $\lim\frac{(\frac{1}{2})n\log n}{A(n)}= 0$. The path towards finding such a threshold function, we believe, will involve the study of the separating sets of critically large subgraphs. 

A proof of this conjecture would provide some idea about the behavior of very large networks and community formation. In the same way that \cite{McDiramid2020} discusses modularity of ER networks towards understanding the community properties of large graphs, we may be able to discuss the properties of community formation in very large graphs as this system provides a similar, but more local, conception of the community structure in a network. A separate goal of this work is to be able to algorithmically find equilibrium partitions of finite graphs to describe community formation, just as \cite{Newman2006} and \cite{Clauset2004} do with modularity, but this conjecture would inform our understanding of the system in a way that algorithms cannot.

\subsection{Discovery of non-trivial equilibria becomes asymptotically rare}
For this conjecture recall that the consensus strategy profiles make up the equivalence class associated with the trivial partition $Q^0$.
\begin{conjecture}
    There exists a function $B(n)$ which is monotonic with $\lim_{n\rightarrow \infty} B(n)=\infty$ so that, if $\phi(Q_{\Gamma_{n,N(n)}}^0)$ is the probability of a random initial strategy profile evolving into a consensus strategy profile, under the dynamics in $\eqref{dsys}$ in the random graph $\Gamma_{n,N(n)}$, then
    \begin{equation}
        \lim_{n\rightarrow \infty}\phi(Q_{\Gamma_{n,N(n)}}^0)=\begin{cases}
            0&\lim_{n\rightarrow \infty}\frac{N(n)}{B(n)}=0\\
            1&\lim_{n\rightarrow\infty}\frac{N(n)}{B(n)}=\infty
        \end{cases}
    \end{equation}
\end{conjecture}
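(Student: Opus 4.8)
The plan is to prove the claimed dichotomy as a sharp threshold result in the usual random-graph style, establishing the ``$0$-statement'' and the ``$1$-statement'' by entirely different methods and then matching them to locate $B(n)$. I would first record that the two extremes are already partly under control. Deep in the sparse regime, below the connectivity threshold $\tfrac12 n\log n$, the graph $\Gamma_{n,N(n)}$ splits into $r\to\infty$ components with high probability (w.h.p.), and the formula $\phi(Q^0)=r^{-(c-1)}$ noted earlier forces $\phi(Q^0)\to 0$; deep in the dense regime I expect a concentration argument to force $\phi(Q^0)\to 1$. Since the simulations (Fig.~\ref{BroadSimHeatMaps}) place the interesting behaviour right at the connectivity frontier, the substantive content is the critical window. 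A clean route to the mere \emph{existence} of $B(n)$ would be to prove that $\phi(Q^0)$ is monotone in the edge count via an edge-coupling and then sandwich it between the two extremes; but monotonicity is itself not obvious, since an extra edge can both spread and obstruct consensus, so I would not rely on it.

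For the $1$-statement ($N/B\to\infty\Rightarrow\phi(Q^0)\to 1$) I would argue by concentration of neighbourhood strategy counts. Fixing a finite strategy set of size $k$ and i.i.d.\ uniform initial strategies, when the mean degree $d=2N/n$ is large each $\Gamma(v)$ is essentially a size-$d$ sample, so the count of every strategy in $\Gamma(v)$ concentrates within $O(\sqrt{d\log n})$ of its global value, uniformly in $v$ by a union bound. I would then show that the strategy holding the initial global plurality is seen as the local plurality at almost every vertex, so a single synchronous best-response step of \eqref{dsys} increases the global gap; iterating this drift and absorbing the $o(n)$ stragglers by a cascade argument yields full consensus. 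This direction is closest to known results on synchronous majority dynamics on $G(n,p)$, which I would adapt rather than reprove; the underlying intuition that density forces consensus is exactly Theorem~\ref{KnTheorem}, the indecomposability of $K_n$.

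For the $0$-statement ($N/B\to 0\Rightarrow\phi(Q^0)\to 0$) I would build \emph{blocking gadgets}. Call a vertex set $S$ frozen at $s$ if every $v\in S$ has more neighbours inside $S$ than using any single strategy outside $S$ (ties suffice, by $\varepsilon$-inertia); then if $S$ is initialised with all vertices at $s$, each plays a best response whatever the rest of the graph does, so $S$ never changes. Two disjoint gadgets frozen at distinct strategies make global consensus impossible. The plan is to: (1) take a small dense template (a triangle, or $K_r$) whose vertices have small external degree; (2) use first and second moment estimates to show that below $B(n)$ the graph w.h.p.\ contains $\omega(1)$ vertex-disjoint such templates with small boundary; (3) observe that each is initialised internally uniform, hence frozen, with probability bounded below, independently across disjoint copies; and (4) conclude that the probability that no two disjoint copies are frozen at distinct strategies tends to $0$, giving $\phi(Q^0)\to 0$. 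This structural mechanism has the advantage of applying even in the connected-but-sparse range, so it can locate a $B(n)$ above the connectivity threshold if one exists there.

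The hard part will be twofold. First, on the sparse side, the \emph{robustness} of a gadget must be proved against the full, globally coupled, stochastically tie-broken, synchronous dynamics: I would isolate a monotone invariant showing that a frozen cluster's internal plurality can never be overturned, so that freezing is an absorbing local event independent of the exterior. The delicate point is that near $B(n)$ external degrees are not uniformly bounded, so the bounded-boundary hypothesis must be weakened to a high-probability statement over the degree sequence. Second, on the dense side, the genuine obstacle is ruling out that the process halts just short of unanimity---at a small non-trivial equilibrium pocket, or in one of the $2$-cycles the simulations do exhibit---rather than completing to consensus; quantifying the absorption of the last $o(n)$ dissenters and showing it succeeds with probability $\to 1$ is where new input is most likely needed. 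Finally, upgrading the resulting coarse threshold to the sharp, monotone $B(n)$ demanded by the statement, and understanding its dependence on the number of strategies $k$, would be the concluding step.
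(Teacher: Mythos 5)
The first thing to say is that the paper does not prove this statement: it is presented as an open conjecture, supported only by the simulation data in Figures \ref{BroadSimMainFig} and \ref{BroadSimHeatMaps} and by the elementary observation that on a disconnected graph with $r$ components the consensus basin has relative size at most $r^{-(c-1)}$, which (via the component counts of Erd\H{o}s--R\'enyi) makes $cn$ a trivial lower threshold function. So there is no proof to compare yours against; what you have written is a research programme for settling the conjecture, and you are candid that its hardest steps remain open. That said, two of your steps have concrete problems beyond ``needs more work.'' For the $1$-statement, the claim that after one synchronous step ``the strategy holding the initial global plurality is seen as the local plurality at almost every vertex'' is quantitatively false in the regime the conjecture cares about: with $k$ i.i.d.\ uniform strategies the global plurality gap is only $\Theta(\sqrt{n})$, so the expected advantage of the plurality strategy inside a neighbourhood of size $d$ is $\Theta(d/\sqrt{n})$, while the fluctuation of a neighbourhood count is $\Theta(\sqrt{d})$; the signal dominates only when $d\gg n$, which never happens. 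The known majority-dynamics results you invoke operate at $p\gtrsim n^{-1/2}$ and for two strategies, far denser than the conjectured threshold $B(n)\approx\tfrac12 n\log n$ (i.e.\ $p\sim\log n/n$), so ``adapting'' them across that gap is itself the open problem. For the $0$-statement, your frozen-gadget mechanism is sound as a sufficient condition (internal degree at least external degree, plus $\varepsilon$-inertia, does make a monochromatic set absorbing), but near the connectivity threshold a fixed template such as a triangle has external degrees of order $\log n\to\infty$, so the probability that any given copy has bounded boundary tends to zero; you flag this but do not supply the replacement gadget, and without it the argument only reproduces the paper's trivial $cn$ bound from disconnectedness.

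Finally, note that even granting both one-sided statements in their respective extreme regimes, the existence of a single monotone $B(n)$ realizing the stated dichotomy does not follow: you correctly decline to assume monotonicity of $\phi(Q^0)$ in the edge count (an added edge can indeed create a blocking structure), but you do not offer a substitute, such as an appeal to a general sharp-threshold theorem, and $\phi(Q^0)$ is not obviously a monotone graph property to which such theorems apply. In short, your proposal is a reasonable and more ambitious attack than anything in the paper, but it is not a proof, and the dense-side concentration step as written would fail.
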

The results of second simulation suggest that $B(n)=\frac{1}{2}n\log(n)$ (which, again, is a threshold function for connectedness \cite{erdHos1960evolution}) could be such a threshold function. The results shown in Figures \ref{BroadSimMainFig} and \ref{BroadSimHeatMaps} show that as graph order increases, once edge density passes a threshold close to the threshold for connectedness, the probability of reaching the consensus equilibrium (or equivalently having a cluster number of 1) approaches 1.

Again, as discussed in section \ref{Simulation}, when the graph is disconnected the relative size of the consensus basin of stability grows as $\frac{1}{r^{c-1}}$, where $r$ is the number of components and $c\geq 2$. By theorem 6 of \cite{erdHos1960evolution}, we see that if $N(n)\sim cn$ for any $c<1$ the expected number of components will tend to infinity as $n\rightarrow \infty$ and so the relative size of the consensus basin of stability will go to zero. This implies that $B(n)\sim cn$ is certainly  a lower threshold function for this property. We conjecture the existence of a threshold function which is a stronger lower threshold function than $cn$. Progress on this question will include examining stability of equilibria. 

It is clear that, should a threshold function exist for both indecomposability ($A(n)$) and for the basin of stability for the consensus equilibrium being 1 ($B(n)$), then $A(n)\geq B(n)$. Moreover, we conjecture that, should these two exist, this inequality will be strict. That is, there will be a region wherein, asyptotically, almost every $\Gamma_{n,N(n)}$ has non-trivial equilibria but they are discoverable with probability 0. This conjecture further gives insight into the nature of community formation in large networks.  Specifically, it shows that through these dynamics of community formation, the non-trivial equilibria are only finite-size effects which do not appear in the infinite system.  
    
\subsection{Impossibility of n-cycles ($n>2$)}
\begin{conjecture}
    There is no graph $G$ such that there exists $n>2$ strategy profiles $\{\uu(1),\uu(2),...,\uu(n)\}$ such that $|\argmax_{c\in C}\{w(v,c|\uu(t))\}|=1$ for all $t=1,2,...,n$ and 
    \begin{equation}
        \begin{split}
            \uu_v(2)&=\argmax_{c\in C}\{w(v,c|\uu(1))\}\\
            \uu_v(3)&=\argmax_{c\in C}\{w(v,c|\uu(2))\}\\
            &\vdots\\
            \uu_v(1)&=\argmax_{c\in C}\{w(v,c|\uu(n))\}
        \end{split}
    \end{equation}
\end{conjecture}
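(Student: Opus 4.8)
The plan is to adapt the Lyapunov-function technique used by Goles and Olivos for parallel threshold dynamics to this multi-strategy, neutral setting. The essential idea is to track a potential built from the agreement across edges between two \emph{consecutive} configurations, rather than an energy of a single configuration.

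First I would introduce the symmetric bilinear ``agreement'' form
$$B(\uu,\uu') := \sum_{uv\in E}\left(\mathbf{1}[\uu_u=\uu'_v]+\mathbf{1}[\uu_v=\uu'_u]\right),$$
where $\mathbf{1}[\cdot]$ is the indicator. Two facts are immediate. It is symmetric, $B(\uu,\uu')=B(\uu',\uu)$, and, by expanding the payoff function \eqref{fitness1} and summing over vertices, $B(\uu,\uu')=\sum_{v\in V} w(v,\uu'_v\mid \uu)$ (each unordered edge is counted once from each endpoint). Hence $B(\uu(t),\uu(t+1))=\sum_v w(v,\uu_v(t+1)\mid \uu(t))$, while symmetry gives $B(\uu(t-1),\uu(t))=B(\uu(t),\uu(t-1))=\sum_v w(v,\uu_v(t-1)\mid \uu(t))$.

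Second, I would establish monotonicity of the potential $L(t):=B(\uu(t-1),\uu(t))$ along any orbit of \eqref{dsys}. Because $\uu_v(t+1)\in\argmax_{c\in C} w(v,c\mid \uu(t))$, every vertex satisfies $w(v,\uu_v(t+1)\mid \uu(t))\ge w(v,\uu_v(t-1)\mid \uu(t))$. Summing over $v$ and substituting the two expressions from the first step yields $L(t+1)=B(\uu(t),\uu(t+1))\ge B(\uu(t-1),\uu(t))=L(t)$, so $L$ is nondecreasing, integer valued, and bounded above by $2|E|$. Specializing to a hypothetical $n$-cycle with $n>2$, the sequence $L(t)$ is both nondecreasing and periodic, hence constant; this forces \emph{termwise} equality in the inequality just used, so $w(v,\uu_v(t-1)\mid \uu(t))=\max_{c} w(v,c\mid \uu(t))$ for every $v$ and every $t$. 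Thus $\uu_v(t-1)$ is itself a best response to $\uu(t)$. The conjecture's hypothesis is exactly that this best response is unique, and $\uu_v(t+1)$ already realizes it, so $\uu_v(t-1)=\uu_v(t+1)$ for all $v$, i.e. $\uu(t-1)=\uu(t+1)$. The period therefore divides $2$, contradicting $n>2$.

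The conceptual heart, and the step I expect to require the most care to get right, is the choice of the \emph{two-configuration} agreement form together with the verification that it is simultaneously symmetric and equal to the summed payoff read off from either time slice. That symmetry is precisely what allows $\uu_v(t+1)$ and $\uu_v(t-1)$ to be compared as competing responses to the \emph{same} middle profile $\uu(t)$, turning a global monotonicity statement into a pointwise best-response comparison. It also explains transparently why length-$2$ cycles persist (the argument concludes only $\uu(t-1)=\uu(t+1)$) whereas the singleton-argmax hypothesis is exactly what eliminates everything longer; any relaxation of that uniqueness assumption is where genuine longer cycles could reappear and would demand separate analysis.
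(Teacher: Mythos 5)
The paper does not prove this statement at all: it is presented purely as Conjecture 3, supported only by the empirical absence of $3$- and $4$-cycles in the simulations, and the discussion explicitly leaves the deterministic impossibility of $n>2$ cycles as an open problem. Your argument therefore cannot be compared to a proof in the paper; instead it appears to settle the conjecture as stated, by transplanting the Goles--Olivos potential-function method for synchronous threshold/majority dynamics. I checked the key steps and they go through: the identity $B(\uu,\uu')=\sum_{v}w(v,\uu'_v\mid\uu)$ follows by counting each edge $\{u,v\}$ once from each endpoint, the symmetry $B(\uu,\uu')=B(\uu',\uu)$ is immediate from the form of the summand, and these two facts let you compare $\uu_v(t+1)$ and $\uu_v(t-1)$ as responses to the same profile $\uu(t)$, giving $L(t+1)\ge L(t)$ termwise. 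On a periodic orbit a nondecreasing periodic sequence is constant, forcing $\uu_v(t-1)\in\argmax_{c}w(v,c\mid\uu(t))$ for every $v$ and $t$; the conjecture's singleton-argmax hypothesis then yields $\uu(t-1)=\uu(t+1)$, so the distinct profiles number at most two. Two points deserve explicit care in a write-up: first, spell out that $\uu(t-1)=\uu(t+1)$ for all $t$ (indices mod $n$) collapses the orbit to at most two distinct profiles whether $n$ is odd or even, contradicting the existence of $n>2$ distinct profiles; second, be clear that your argument addresses exactly the deterministic formalization in the conjecture (where $|\argmax|=1$ everywhere along the cycle) and not the paper's looser probabilistic claim about stochastic pseudo-cycles, though the paper itself reduces the latter to the former. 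It is also worth noting what your symmetry argument correctly predicts: nothing rules out $2$-cycles (e.g.\ $K_2$ or $K_{1,n}$), and the uniqueness hypothesis is precisely the hinge that excludes longer periods. This is a genuine strengthening of the paper's content, and you should flag it as such rather than as a reproof.
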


Throughout this entire project, throughout every attempt to run each simulation, neither a true 3-cycle nor a true 4-cycle was ever discovered. This suggests that, at least, these kind of limits are exceedingly unstable, but we further conjecture that n-cycles with $n>2$ exist with probability 0.

Frequently we observe fixed points and 2-cycles which, in the state space, behave as in figure \ref{Statespace12}.  These exist deterministically and are uncomplicated to understand. Indeed, $K_2$ will exhibit a 2-cycle whenever the initial condition is non-trivial. It is not yet clear why $n>2$-cycles do not appear as readily. If such a cycle exists stochasticly it will continue to cycle for all time with probability zero, so to prove our conjecture we will need to show that deterministic $n>2$ cycles are impossible.

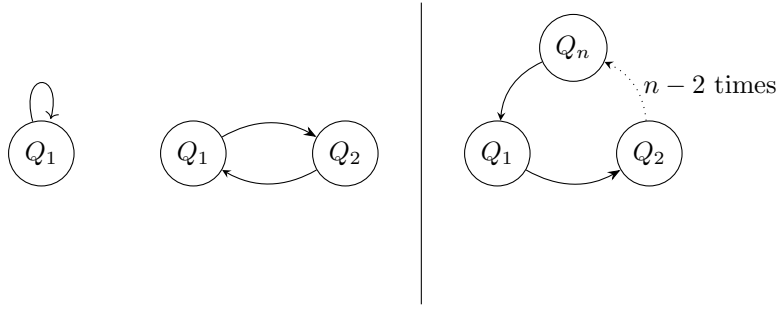
\begin{figure}
    \centering
    \begin{tikzpicture}
        \node (A) [draw = black, circle] at (0,0){$Q_1$};

        \draw  (A)edge [loop above](A);

        \node (B) [draw = black,circle] at (2,0){$Q_1$};
        \node (C) [draw = black,circle] at (4,0){$Q_2$};

        \draw (B) edge [bend left, arrows = -Stealth] (C);
        \draw (C) edge [bend left, arrows = -stealth] (B);

        \draw (5,2)--(5,-2);

        \node (D) [draw = black,circle] at (6,0){$Q_1$};
        \node (E) [draw = black,circle] at (8,0){$Q_2$};
        \node (F) [draw =black,circle] at  (7,1.4){$Q_n$};
        \draw (D) edge [bend right, arrows = -Stealth ] (E);
        \draw (E) edge [bend right, arrows = -stealth, dotted] node[right]{$n-2$ times} (F);
        \draw (F) edge [bend right, arrows = -stealth] (D);
    \end{tikzpicture}
    \caption{$Q_i$ are elements of the state space. On the left, 1-cycles (which is an equilibrium) and a 2-cycles are frequently observed and behave deterministically. On the right an $n$-cycle with $n>2$ is never observed deterministically and stochastically will exist with probability 0.}
    \label{Statespace12}
\end{figure}
In the simulations, the conditional statements which check for an $n$-cycle have a false positive rate of $\leq 2^{n-T}$, where $T$ is the tolerance (which was set to 34 so that the false positive rate, even for $4-$ cycles, is $ <10^{-9}$), but during early iterations of the simulation (when the filter had a false positive rate of only $\leq 2^{-\lfloor\frac{T-n}{n}\rfloor}$) we would see rare false positive 4-cycle which could pass through the filter with probability $2^{-7}$ with the behavior demonstrated in figure  \ref{Statespace4}

\begin{figure}[h!]
    \centering
    \begin{tikzpicture}
        \node (D) [draw = black,circle] at (0,0){$Q_1$};
        \node (E) [draw = black,circle] at (-1.4,2){$Q_2$};
        \node (F) [draw =black,circle] at  (1.4,2){$Q_3$};
        \draw (D) edge [bend left, arrows = -Stealth] node[left] {$p=0.5$} (E);
        \draw (E) edge [bend left, arrows = -Stealth] (D);
        \draw (D) edge [bend right, arrows = -stealth] node [right] {$p=0.5$}(F);
        \draw (F) edge [bend right, arrows = -stealth] (D);
    \end{tikzpicture}
    \caption{A pseudo 4-cycle which ``starts" at $Q_1$ then with probability $0.5$ proceeds to either $Q_2$ or $Q_3$. At either of these state, it certainly returns to $Q_1$. This behavior can appear like at 4 cycle with non-zero probability but will continue to behave as a 4-cycle with probability zero as time extends to infinity. }
    \label{Statespace4}
\end{figure}
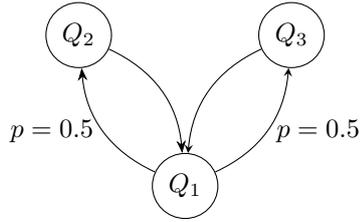

These pseudo 4 cycles give some suggestions as to why true deterministic $n$ cycles with $n>2$ do not appear in our observation. In order for a state to be reached deterministically, every best response set must be of size 1. The resulting strict inequalities present the main limitation. Aside from being a remarkable feature of this dynamical system, the impossibility of large cycles, again, offers new insights into the process of community formation when is driven by local interactions alone. 

\subsection{General Discussion}
Here we have introduced the structured coordination game with neutral options and we have seen that the dynamics are categorically different from the dynamics of structured coordination game when one strategy may dominate the other in terms of payoff or risk avoidance and so cannot be studied in the same way. When noise enters the system through stochastic tie breaking rather than through random mutation, as in \cite{Kandori1993,Ellison1993,Weidenholzer}, we cannot use use the radius and coradius of absorbing sets to find long run equilibrium as in \cite{Ellison1993}. However, we conjecture that asymptotically many of the same results (adapted to ignore payoff/risk dominance) remain in place. That is, in general, we can expect to converge to a consensus equilibrium even if other equilibria are present. 

The main result is that higher connectivity of the network promotes convergence to the consensus equilibrium. This is not an altogether surprising result. It is, however, critically different from the conclusions of \cite{Buskens2016} that network density does not have an effect on the likelihood of behavior to converge to a homogeneous state. These differences can be explained both by the differences in graph order being observed in the studies and by the fact that, in the non-neutral case, differences in payoff mean that converging to a non-consensus equilibrium is already rare, even at low edge densities. The graphs considered herein were, in general, nearly amodular with high probability \cite{Campigotto2013}, however, near the parameter space where $\Gamma_{n,p}$ is not likely to be connected, the expected modularity increases with high probability to above 0.4 \cite{McDiramid2020}. Thus, as in \cite{Buskens2016}, we observe that higher modularity is associated with non-trivial equilibria (which they call persistence of heterogeneous behavior).  

This simulation approach has helped lay the groundwork for an exciting body of analytical work to come. With a complete catalogue of equilibria for small graphs, as well as extensive simulated results in Erd\H{o}s-R\'enyi graphs, we now have robust evidence to support our three main conjectures: 1) the existence of a threshold function for indecomposability which is close to ${n\choose 2}-cn$, 2) the existence of a threshold function for the relative size of the basin of stability of the consensus equilibrium which is close to $\frac{1}{2}n\log n$, and 3) the impossibility of $n$-cycles in this system for $n>2$. These results represent first steps towards understanding coordination under myopic best response in this critical case, which in turn will improve our understanding of many processes from the selection of conventions among coworkers in an office to spatial distribution of language and culture, and how these may change with increasing interconnectedness. 

\section{Supplemental Materials} Accompanying this work is a supplemental file containing a visual catalogue of the partitions described in section \ref{n6}.

\section{Acknowledgments}
    Much of the computation that was crucial for this project was performed with the National Institute for Modeling Biological Systems (NIMBioS) computational resources at the University of Tennessee. Special thank you to Graham Derryberry for help using the cluster computer.

    Further, we would like to thank the members of the Fefferman Lab for commenting on this work. 

\section{Statements and Declarations}
    \subsection{Declaration of Competing interests} The Authors have no competing interests to declare.

    \subsection{Code Availability}\label{code}
    The code that was used to run the simulations and produce the catalogue can be found at \url{https://zenodo.org/doi/10.5281/zenodo.11237078}

    \subsection{Ethical Statements}
    This declaration is not applicable to this work.

    \subsection{Funding}
    This work was not funded. There are no funding sources to declare. 

\bibliography{StructuredCoordinationMainTextJune14.bib}

\end{document}